\documentclass[journal,onecolumn,draftclsnofoot,12pt]{IEEEtran}%
 \usepackage[overload]{empheq}
\usepackage{setspace}
\usepackage{lettrine}
%\doublespacing
%\singlespacing
%\onehalfspacing
%\usepackage{tikz}

\usepackage{graphics,
           psfrag,
           epsfig,
           amsthm,
           cite,
           amssymb,
           url,
           dsfont,
           subfigure,
           %algorithm,
           algorithmic,
           balance,
           enumerate,
           color,
           setspace
           %algorithm2e
}

\usepackage{capt-of}% or \usepackage{caption}
\usepackage{booktabs}
\usepackage{varwidth}

\usepackage{amsmath}%[centertags][tbtags][sumlimits][nosumlimits][intlimits][namelimits][nonamelimits]
\usepackage{epstopdf}

\newtheorem{example}{Example}
\newtheorem{definition}{Definition}

\newtheorem{lemma}{Lemma}

\newtheorem{theorem}{Theorem}
\newtheorem{corollary}{Corollary}

\usepackage{ragged2e}
\newcommand{\eref}[1]{(\ref{#1})}
\newcommand{\sref}[1]{Section~\ref{#1}}

\newcommand{\pref}[1]{Proposition~\ref{#1}}
\newcommand{\cref}[1]{Constraint~\ref{#1}}
\newcommand{\thref}[1]{Theorem~\ref{#1}}
\newcommand{\corref}[1]{Corollary~\ref{#1}}

\usepackage[letterpaper, left=1in, right=1in, bottom=1in, top=1in]{geometry}
\newcommand{\algref}[1]{Algorithm~\ref{#1}}

% correct bad hyphenation here
\hyphenation{op-tical net-works semi-conduc-tor}

\newcommand{\ignore}[1]{}

%\renewcommand{\theequation}{\thesection.\arabic{equation}}%labels the equations by their section
%\numberwithin{equation}{section}% separate labeling of each section from the other

\epsfxsize=3.0in
\pagestyle{plain}%page style can be "plain","empty","headings"or"myheadings".
%"plain" is the default one for "article" and "report" document style. It numbers the pages.
%"empty" there is no numbering for pages.
%"headings" the page number and any other information provided by the document style is put at the top of the page.
%"myheadings" the same as "headings", except that the material to go at the top pf the page is determined by \markboth
%and \markright commands.
\IEEEoverridecommandlockouts
%\doublespacing
%*****************different kinds of text's shape**************
%\textup{...}
%\emph{...}
%\textit{...}
%\textbf{...}
%\textsf{...}
%\textsc{...}
%\texttt{...}
%\textrm{...}
%\textsl{...}
%All of the commands that begin with \text should be changed to begin with \math when applying them to
%mathematical expressions.
%
%**************************************************************
%*************Font sizes***************************************
%{\tiny ...},{\scriptsize ... },{\footnotesize ...},{\small ....},{\normalsize ...},{\large ...},
%{\Large ...},{\LARGE ...},{\huge ...},{\Huge ...}
%**************************************************************
%*************Quote command************************************
%`'
%\begin{quote}\end{quote}
%\begin{quotation}\end{quotation}
%**************************************************************

%\pagestyle{empty}

\addtolength{\textfloatsep}{-2mm}
\setlength{\abovedisplayskip}{0.5mm}
\setlength{\belowdisplayskip}{0.5mm}
\setlength{\abovecaptionskip}{0.5mm}
\setlength{\belowcaptionskip}{0.5mm}
\setlength{\floatsep}{1mm}

\usepackage{mathtools}
\usepackage{color}
\usepackage[utf8]{inputenc}
\usepackage[ruled, lined, longend]{algorithm2e}
\SetEndCharOfAlgoLine{}
\usepackage{color}
\begin{document}
%\begin{spacing}{1.5}

\title{\vspace{-.35cm} %Joint Network Coding and Transmission Rate for Delay Reduction in D2D-Enabled Fog Radio Access Networks
%Resource Allocation using Network Coding\\ in D2D-Enabled Fog Radio Access Networks
Completion Time Minimization in Fog-RANs using D2D Communications and Rate-Aware Network Coding}

\author{
 \IEEEauthorblockN{Mohammed S. Al-Abiad, \textit{Student Member, IEEE}, and Md. Jahangir Hossain, \textit{Senior Member, IEEE} \vspace{-1.6cm}}
\thanks {
Mohammed S. Al-Abiad and Md. Jahangir Hossain are with the School of Engineering, University of British Columbia, Kelowna, BC V1V 1V7, Canada (e-mail: m.saif@alumni.ubc.ca, jahangir.hossain@ubc.ca).
}
}

\maketitle
\begin{abstract}
\ignore{Device-to-device (D2D) communications and fog radio access
network (F-RAN) have been introduced to meet growing users' demand.} The  device-to-device communication-aided fog radio access network, referred to as \textit{D2D-aided F-RAN}, takes advantage of caching at enhanced remote radio heads
(eRRHs) and D2D proximity for improved system performance. For D2D-aided F-RAN,  we develop a framework that exploits the cached contents at
eRRHs, their transmission rates/powers, and previously received contents by different users to deliver the
requesting contents to users with a minimum completion time. Given the
intractability of the completion time minimization problem,  we formulate  it at each transmission by approximating the completion time and  decoupling it into two subproblems. In the
first subproblem, we minimize the possible completion time in
eRRH downlink transmissions, while in the second subproblem, we maximize the number of users to be scheduled on D2D links. We design two theoretical graphs, namely \textit{interference-aware instantly decodable network coding (IA-IDNC)} and \textit{D2D conflict}  graphs to reformulate two subproblems as maximum weight clique and maximum independent set problems, respectively. Using these graphs, we heuristically develop joint and coordinated scheduling  approaches. Through extensive  simulation results, we demonstrate the effectiveness of the proposed schemes against existing baseline schemes. Simulation results show that the proposed two approaches achieve a considerable
performance gain in terms of the completion time minimization. \vspace{-0.5cm}
\end{abstract}

\begin{IEEEkeywords}
\vspace{-0.4cm}
Device-to-device communications, fog radio access networks, coordinated scheduling, network coding, power allocation, time-critical applications.
\vspace{-0.5cm}
\end{IEEEkeywords}

\section{Introduction}
\vspace{-0.28cm}
\IEEEPARstart{W}{ith} the explosive increase in users' demand, the data rate and Quality-of-Service (QoS) performance of
current radio access networks need to be improved significantly
\cite{1n}. Cloud radio access
network (C-RAN) is a promising solution to improve the QoS for its users and support the exponentially growing demands \cite{2n}. The cloud
base-station (CBS) in C-RAN connects with distributed remote radio heads (RRHs) via fornthual links for  cooperative transmission \cite{3n,4n, 7nnnn, 27n}. Since the requested contents are not
cached at the RRHs, the capacity and
delay constrained  of fronthaul links limit the performance of C-RANs to
meet the growing demand in 5G cellular networks \cite{5n}. Therefore,
Fog-RAN (F-RAN) has been introduced that exploits both edge caching and C-RAN for carrying out content delivery effectively \cite{6nn}. In F-RAN, the so called \textit{enhanced RRHs} (eRRHs) support high caching capability. 

In order to further improve the performance of F-RANs, implementing device-to-device (D2D) communications \cite{6nnnn} in F-RAN is shown to be a potential technology in 5G and beyond. This integrated system is referred as D2D-aided F-RAN \cite{9nn}. D2D-aided F-RAN system draws a remarkable benefit for reducing both users' contents delivery time and burden on fronthaul links.  Thanks to the edge caching at the eRRHs and users' cooperation via D2D communications, this paper is focused on content delivery problem in D2D-aided F-RAN system. The content delivery problem of interest is motivated by immediate delivery of common popular contents for real-time applications, i.e., live video streaming. In particular, we study the scheduling of content delivery problem from both the eRRHs and potential transmitting users in D2D-aided F-RAN system using network coding (NC) \cite{6nnn}.

The problem of delivering contents, i.e., a frame of delay-sensitive files, to a set of users  with minimum possible delay  has been a topic of research for a quite some time. This problem is referred as \textit{completion time minimization} problem. Based on layer functionalities, existing NC solutions for this problem can be classified  into upper-layer NC  \cite{12n, 13, 13n,14n,15n,16n, 9n, 9nn} and rate aware NC \cite{18n, 19n,19nn, 20n,20nn, 21nn, 21n,22n} methods. As their names indicate, upper layer NC algorithms focused only on NC at the network layer to minimize the number of transmissions. Rate aware NC approaches incorporate both upper and physical layers to minimize the completion time (in second) required to deliver requested files to all requesting users. The latter is more practically relevant as it involves the dynamic nature of wireless channels in the completion time optimization. 

\ignore{Our work considers the downlink of D2D-enabled F-RANs consisting of
several single-antenna users and several single-antenna eRRHs connected to one CBS. These users are connected partially to each other and interested in downloading a set of files. As such, the completion time is minimized. The problem of interest is motivated by real-time applications, i.e.,  video streaming, in which each decoded file
is immediately used at the application layer and partial decoding of contents is crucial to user's
experience. Therefore, we adopt a real-time subclass of NC, namely Instantly Decodable Network Coding (IDNC), at both eRRHs and potential transmitting users. IDNC allows progressive and XOR encoding of source files at the transmitter and  progressive decoding of the received files at the receiver. It requires a small
coefficient reporting overhead and easy for implementation on battery-powered devices  \cite{16n}.}

\subsection{Related Works and Challenges}
\textit{Related Works in Physical Layer:} Most relevant works on C-RANs focused on scheduling users to
RRHs in order to maximize sum-rate, e.g., \cite{7nnnn, 27n}, \cite{33,24n,25n}. The study in \cite{7nnnn} was extended in \cite{27n} to include power allocation optimization for the radio resource blocks. However, these studies are agnostic to the
available side information at network layer, i.e., requested and previously received contents by different users. As a result, each eRRH sends uncoded file that serves a single
user. The term ``uncoded file" is referred to file without NC. It has been observed that users tend to have a common interest in requesting same contents, especially
popular videos, within a small interval of time \cite{16n}. This happens
frequently in a hotspot, e.g., a playground, a
public transport, a conference hall, and so on. In aforementioned schemes, the contents are transmitted  without NC, which
degrade the system performance. Therefore, a  subclass of NC,
namely the Instantly Decodable NC (IDNC), can be exploited to efficiently select a combination of contents (binary XOR combination) that can benefit a subset of interested users.

\textit{Related Works in Network Layer:} The completion time minimization problem in IDNC-based networks  was considered in different network settings, e.g., point-to-multipoint (PMP) \cite{9n, 12n}, D2D networks \cite{13n, 14n}, D2D F-RANs \cite{15n}. In particular, the authors of \cite{9n,12n,13n} proposed schemes to deliver the requested files by users  with a minimum possible number of transmissions. Recently, in \cite{15n}, a centralized D2D F-RAN scheme was proposed for completion time reduction. However, the aforementioned works considered IDNC from the perspective of network-layer. The main drawback is that  the transmission rate of each radio resource block is selected based on the user with the weakest channel quality. This results in prolonged file reception time and thus, consumes the time resources of network. Therefore, considering both network layer coding and physical layer factors, such as transmission rate, is crucial, which is known as \textit{rate-aware IDNC} (RA-IDNC) \cite{18n}.

\textit{Related Works in RA-IDNC}: With RA-IDNC, the completion time minimization problem needs a careful optimization of selecting the IDNC file and transmission rate of each radio resource, see for example \cite{19n, 19nn, 20n, 21n, 22n}. The authors of \cite{21n} used RA-IDNC in C-RANs for completion time reduction. However, the authors assumed that all RRHs maintain a fixed transmit power level. Moreover, for synchronization purposes,  the same transmission rate (i.e., the lowest transmission rate of all RRHs) is selected. This
may violate the QoS rate guarantee and lead to a longer time for file transmission.  Importantly, the proposed solution did not exploit the high capabilities of D2D communications.  Recently, a cross-layer IDNC scheme was proposed for cloud offloading in F-RAN \cite{23nn}. Inspired by \cite{23nn}, our work  addresses the completion time minimization problem in D2D-aided F-RAN system using RA-IDNC and D2D communications.

\textit{Challenges:} The completion time minimization problem in D2D-aided F-RAN involves
many factors, such as power levels of eRRHs, their cached files, users' limited coverage zones, their requested and previously received files, and their heterogeneous
physical-layer capacities. Since all these combinatorial factors need to be jointly considered, such  problem is intractable. Indeed, considering only power  levels factor for solving a fixed schedule (without NC) problem is non-convex \cite{27n}, \cite{33,24n,25n,26n}. To the best of the authors' knowledge, this work is the first attempt to solve the completion time minimization problem in D2D-aided F-RAN system while considering all above factors. Besides the intractability of the aforementioned factors, a key challenge to the problem is the use of IDNC codes in D2D-aided F-RAN as the objective of both techniques can be contradicting.\ignore{   If we greedily combine users' files using IDNC to each eRRH, the number of users left to
be served by users' cooperation over potential D2D links can be reduced. However, the transmission rate of each eRRH should match the lowest capacities among all
its assigned users. This results in a longer time for file transmission which conflicts with the potential of F-RAN. On the other hand, pre-setting a minimum target
transmission rate in each eRRH usually results in assigning a few users to it. This contradicts with the goal of IDNC that aims to combine files that satisfies a significant set of users.  Furthermore, serving few users by eRRHs means a high number of transmitting users over D2D links is needed to serve a set
of (or possibly all) remaining users. Because of half-duplex channels, the transmitting users  would not benefit from D2D transmissions as they transmit. Hence, their completion times are increasing, instated of decreasing.} Therefore, a balance among the conflicting effects of IDNC codes, scheduled users, and transmission rates/powers of eRRHs and users in D2D-aided F-RAN system is crucial
to minimize the
total frame delivery time.

 \ignore{\IEEEPARstart{S}{mart} wireless devices and popular applications  have become more popular and abundant in current radio access networks. The increased energy and storage capability of smart devices switch them into active nodes in the network. The \textit{fifth} generation (5g) mobile networks are considering communication techniques between these smart devices as a potential solution to support a large number of connected devices \cite{1AA, 2AA}. This technique referred to as device-to-device (D2D) communications. The great caching capabilities of these smart devices have
encouraged the emergence of fog radio access networks (F-RANs) \cite{3AA, 4AA}. 
In fog networks, the central cloud
base-station (CBS) exploits the communication and computing resources of enhanced remote radio heads (eRRHs) and their storage capacities. It can also exploit the capabilities and storage capacities of devices themselves []. For example, the most frequently requested files by a large number of users in a service region are pre-fetched in the memories of both eRRHs and smart devices. This saves the CBS's resources and fastens the access to these files and thus, satisfies
Quality of Service (QoS) requirements \cite{5AA}. In this paper, we consider such fog networks from file delivery perspective. The CBSs disseminate files in the network for quick access from within the service region, i.e., the eRRHs and devices, when required. Thus, real-time applications that tolerate only small delays, e.g., multimedia streaming, online gaming, etc, are of interest in this work. For such applications, Network Coding (NC) \cite{6AA} enables reliable communications and fast file dissemination schemes \cite{7AA}. An important subclass of NC, named Instantly Decodable Network Coding (IDNC) \cite{8AA}, suits these real-time applications that are suitable for battery-powered devices. IDNC provides instant encoding and decoding using XOR binary operation. The IDNC-decoded packets are ready-to-use at the application layer at their reception instant. For its aforementioned capabilities, IDNC has been applied in different settings for optimizing decoding delay and packet recovery. Extensive survey on the applications of IDNC can be found in  \cite{9AA}.
The aforementioned works focused on designing upper layer NC algorithms and abstracted the channel impairments into simple erasure-channel models.  Fo this reason, the authors of \cite{9AA} introduced rate-aware NC (RANC) scheme that accelerates packets reception in different network settings \cite{10AA, 11AA, 12AA, 13AA, 14AA, 15AA}. The authors of \cite{15AA} used RANC in cloud radio access networks (C-RANs) for completion time reduction. }

\subsection{Contributions}
In this work, we tackle the completion time minimization problem  in downlink D2D-aided F-RAN settings. To this end, we introduce a novel optimization framework taking network coding, rate/power optimization, potential D2D communications, and users' limited coverage zones into account. In the proposed framework, network-coded transmissions from both eRRHs and potential users are developed to deliver all files to all requesting users in the least amount of time.  
The main contributions of our work are summarized as follows. 
\begin{enumerate}
\item For a D2D-aided F-RAN, we develop a framework where eRRHs and users collaborate  to minimize the completion time. In particular, given the intractability of solving the completion time minimization problem over all possible future NC decisions, we reformulate the problem at each transmission with
the constraints on user scheduling, their limited coverage zones, transmission rates, maximum power
allocations, and QoS rate guarantee. By analyzing the
problem, we decompose it into two subproblems. 
\item The first subproblem aims to obtain the possible completion time in eRRHs transmissions through minimizing the transmission time. To solve
it, we design an Interference-Aware IDNC (IA-IDNC) graph that efficiently solves the user scheduling and power allocation problem jointly under the completion time constraints. Based on this, the transmission time achieved by eRRHs is revealed for solving the second subproblem. Then, we introduce a new D2D conflict graph to heuristically solve the second subproblem, i.e., maximizing the number of
users that can be scheduled on D2D links. The aforementioned graph-based solutions of the corresponding subproblems will be referred to as \textit{Joint Approach}.
\item Since
the IA-IDNC graph in the joint approach grows fast with the NC combinations in large network size, we propose an alternative and efficient low-complexity \textit{coordinated scheduling approach} that solves the completion time problem using graph theoretic method.
\item We compare our proposed schemes with existing coded and uncoded (without NC) schemes. Selected numerical results demonstrate
that the proposed schemes can effectively improve completion time performance.
\end{enumerate}

The rest of this paper is organized as follows. In \sref{SMMM}, we present an overview of the D2D-aided F-RAN system. In \sref{PF}, we describe the NC model and analyze the transmission time for simplifying the expression of the completion time. The completion
time minimization problem at each transmission is formulated and decomposed  in \sref{PFPD}.\ignore{ The designed graphs and problems transformation can be found in \sref{G}.} We solve the problem jointly in \sref{JA} and propose a relative low complexity approach in \sref{CS}. Numerical results are presented in \sref{N}. Finally, \sref{C} concludes the paper.

\subsection{Notations}
Matrices are shown by bold characters, e.g., $\textbf{C}$. Calligraphic
letters denote sets and their corresponding capital letters denote the cardinalities
of these sets, e.g., $N=|\mathcal N|$. Further, $\mathcal P(\mathcal N)$ shows the power set of set $\mathcal N$ and $\mathcal A \times \mathcal B$ shows the Cartesian product
of the two sets $\mathcal A$ and $\mathcal B$.
\section{System Model} \label{SMMM}

\subsection{System Overview}
We consider a D2D-aided F-RAN system, shown in Fig. \ref{fig1}, that
consists of one cloud base station (CBS), $K$ single antenna enhanced remote radio heads (eRRHs), and $N$ users. The sets of eRRHs and
users are denoted by  $\mathcal{K}=\{e_1,e_2,\cdots,e_K\}$, $\mathcal{N}=\{u_1,u_2, \cdots,u_N\}$, respectively. The CBS is responsible for making the NC decisions, power allocation, delivering the instructions to eRRHs
and transmitting users for executions. It also communicates with eRRHs through fronthaul links.  Since
users are allowed to transmit at a certain amount of power, each device has limited coverage zone, denoted by $\mathcal Z_{u_i}$, which represents the service area of the $u_i$-th user to transmit data within a circle of radius $\mathtt R$. Note that user and device are used interchangeably throughout this paper. The set of devices  within the transmission range of the $u_i$-th device is defined by $\mathcal Z_{u_i}=\{u_j\in \mathcal{N}| d^{d2d}_{u_i,u_j}\leq \mathtt R$\}, where $d^{d2d}_{u_i,u_j}$ is the distance between the $u_i$-th and  $u_j$-th devices. Devices can use the same frequency band and transmit encoded files simultaneously via D2D links. We assume there is a set of $F$ popular files, denoted by  $\mathcal{F} =\{f_1,f_2,\ \cdots, f_F\}$. This data frame
constitutes the set of most frequent requested files by the users within a given time duration in a hotspot area. Following the caching model in \cite{23nn}, the $e_n$-th eRRH caches a subset $\mathcal{C}_{e_n}$ that represents its \textit{cache}, i.e., $|\mathcal{C}_{e_n}|=\mu F, \forall e_n\in \mathcal{K}$, where $0\leq\mu\leq 1$ is
the fractional cache size. Further, we assume that all eRRHs collectively cache all files in the frame, i.e., $\bigcup^{K}_{{i=1}}\mathcal{C}_{e_n}=\mathcal {F}$. The distribution of files among eRRHs is assumed to be given, and some common files can be cached in different eRRHs' caches.

In this paper, each device is assumed to be equipped with single antenna and used half-duplex channel. Thus, each device can access to either a D2D channel or cellular channel, and accordingly, it can either transmit or receive at a given time instant. Moreover, the allocated channels for D2D communications  are
assumed to be orthogonal (out-of-band) to those used by eRRHs, i.e., an overlay D2D communication
model is adopted \cite{9nn}. 
%%%%%

\begin{figure}[t!]
\centering
\includegraphics[width=0.45\linewidth]{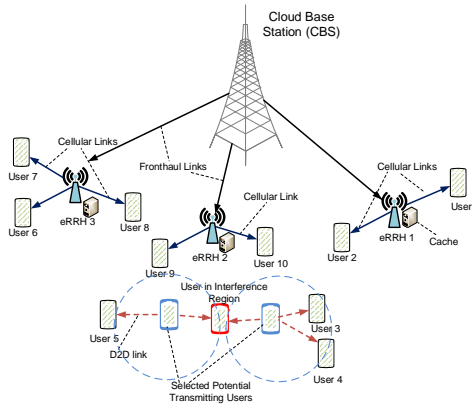}
\caption{Illustration of the D2D-aided F-RAN model with $13$ users, $3$ eRRHs and $1$ CBS.}
\label{fig1}
\end{figure}
\subsection{Physical Layer Model}
The achievable
rate at the $u_i$-th user when receives file
from the $e_n$-th eRRH is given by $
R^{c}_{e_n,u_i}= \log_{2}(1+\text{SINR}_{e_n,u_i}(\textbf{P})), \forall e_n\in \mathcal{K}, \forall u_i\in \mathcal{N},$ where SINR$_{e_n,u_i}(\textbf{P})$ is the corresponding signal-to-interference plus noise-ratio experienced by the $u_i$-th user when it is assigned to the $e_n$-th eRRH. This SINR is given by
\begin{equation}
\text{SINR}_{e_n,u_i}(\textbf{P}) = \cfrac{P_{e_n} |h^{c}_{e_n,u_i}|^{2}}{N_{0} +\sum_{e_{n^\prime}\in \mathcal K, e_{n^\prime}\neq e_n}P_{e_{n^\prime}}| h^{c}_{e_{n^\prime},u_i}|^{2}}, 
\end{equation}
where $h^{c}_{e_n,u_i}$ denotes the channel gain  between the $u_i$-th user and $e_n$-th eRRH, $N_0$ denotes the noise power, $P_{e_n}$ denotes the transmit power of $e_n$-th eRRH, and $\textbf{P} = [P_{e_n}], \forall e_n\in \mathcal K$ is a row vector containing the power levels of the eRRHs in the considered network. The set of users' rates across all eRRHs can be written as 
$\mathcal{R} = \bigotimes_{(e_n,u_i) \in \ \mathcal{\mathcal{K}} \times \mathcal{N}} R^{c}_{e_n,u_i}$,
where the symbol $\bigotimes$ represents the product of the set of the achievable rates.

Similarly, let $h^{d2d}_{u_k,u_i}$ denote the channel gain for the D2D link between the $u_k$-th and $u_i$-th users and $Q_{u_k}$ denote the transmit power of the $u_k$-th user. Then, the achievable rate of D2D pair ($u_k, u_i$) is given by 
$r^{d2d}_{u_k,u_i} = \log_2\left(1+ \cfrac{Q_{u_k} |h^{d2d}_{u_k,u_i}|^{2}}{N_{0} +\sum_{u_{k^\prime}\in \mathcal N_\text{tra}, u_{k^\prime}\neq u_k}Q_{u_{k^\prime}}| h^{d2d}_{u_{k^\prime},u_i}|^{2}}\right), \forall u_k,u_{k^\prime}\in \mathcal N_\text{tra},~ \text{and}~ u_i \in \mathcal{Z}_{u_k}\cap \mathcal Z_{u_{k^\prime}}$, where $\mathcal N_\text{tra}$ is the set of transmitting
users via D2D links.

We assume $h^{c}_{e_n,u_i}$ and $h^{d2d}_{u_k,u_i}$ to be fixed  during a single eRRH and D2D transmissions but change independently from one file transmission to another file transmission.

The channel  capacities of all pairs of D2D links  can be stored  in an $N \times N$  \emph{capacity status matrix (CSM)} $\textbf{r} = [r_{u_k,u_i}],$ $\; \forall (u_k, u_i)$. Since $u_k$-th user does not transmit to itself and cannot transmit to other users outside its coverage zone, $r^{d2d}_{u_k,u_k} = 0$ and $r^{d2d}_{u_k,u_l} = 0, \forall u_l\notin \mathcal Z_{u_k}$.

\section{Network Coding and Completion Time Minimization}\label{PF}
\ignore{In this section, we first discuss the network coding model, and then we analyze the transmission time for simplifying the expression of the completion time. A simple example to understand the completion time metric being optimized  is provided.}

\ignore{\begin{example} \label{ex:capacities}
An example of CSM of D2D links with $D = 6$ devices  is
\begin{equation} \label{eqn:rsm}
\mathbf{R} = \begin{pmatrix}
  0 &  5.3  & 6.5& 3& 5& 2.5\\
  6.1 &  0 & 5.9& 5& 3.5& 6\\
  8.5 &  7.7 & 0& 5.1& 3.4& 2.2\\
  5.5 &  7.2 & 4& 0& 3& 3.4\\
  8.5 &  6.6 & 3.5& 5& 0& 2.2\\
  8 &  3.5 & 2.3& 4.5& 3.4& 0\\
\end{pmatrix}.
\end{equation}
$r_{1,2} = 5.3$ denotes that the channel  capacity from transmitting device $1$ to device $2$ is $5.3$. Moreover,  due to the difference in the transmit powers and thus, the different levels of interference experienced  by each of the devices, this CSM is not symmetric. 
\end{example}}

\subsection{Network Coding in the Network-Layer}
We assume that users are interested in receiving the whole frame $\mathcal F$, and they have already acquired some files in $\mathcal F$ from prior broadcast  transmissions (i.e., without NC) \cite{16n}. The previously acquired files by $u_i$-th user is denoted by the \textit{Has} set $\mathcal H_{u_i}$, and its requested files is denoted by the \textit{Wants} set, i.e., $\mathcal{W}_{u_i} = \mathcal{F}\setminus \mathcal{H}_{u_i}$. Taking advantage of the acquired and requested files
by different users, each eRRH and D2D transmitter can perform
XOR operation on these files and send the combined XORed files to the interested users. As such, the
requested files are delivered to requesting users with minimum completion time.\ignore{ We use transmission/time index $t$ to represent the starting
time of the $t$-th time slot, i.e., $t=1$ refers to the beginning of the first transmission slot.} We use the subscript $t$ to represent the index of transmission/time slot, e.g., $t=1$ refers to the first transmission slot. After each transmission,
each user feedbacks to the eRRHs and neighboring users an acknowledgment for each received file, and accordingly, the \emph{Has}  and \emph{Wants} sets are updated by the CBS \cite{15n}, \cite{16n}. The set of users having \emph{non-empty Wants sets} at the $t$-th transmission slot is denoted by $\mathcal N_{w,t}$, which is defined as $\mathcal N_{w,t} = \{u_i \in \mathcal N | \mathcal{W}_{{u_i},t} \neq \varnothing\}$. When a user receives its requested files, it can act as a D2D transmitter to provide its received files to the interested neighboring users.

Let $\mathtt {f}^{c}_{e_n,t}$ and $\mathtt {f}^{d2d}_{u_k,t}$ denote the XOR file combinations to be sent by the $e_n$-th eRRH and $u_k$-th D2D transmitter, respectively, to the sets of scheduled users $\mathtt u(\mathtt {f}^{c}_{{e_n,t}})$ and $\mathtt u(\mathtt {f}^{d2d}_{{u_k},t})$ at the $t$-th transmission. For simplicity, the subscript transmission index $t$ is often omitted when it is clear from the context. These file combinations $\mathtt {f}^{c}_{e_n}$ and $\mathtt {f}^{d2d}_{u_k}$ are elements of the power sets $\mathcal P({\mathcal C_{e_n}})$ and $\mathcal P({\mathcal H_{u_k}})$, respectively. At every transmission slot $t$, each scheduled user  in $\mathtt u(\mathtt {f}^{c}_{{e_n}})$ can re-XOR $\mathtt {f}^{c}_{{e_n}}$  with its
previously received files to decode a new file. To ensure successful reception at the users, the  maximum transmission rate of a particular transmitting eRRH/user  is equal to the minimum achievable capacity of its scheduled
users. For discussion convenience, the term ``targeted users" is referred to a set of scheduled users who receives an instantly-decodable transmission. Therefore,  the set of targeted users by $e_n$-th eRRH is expressed as $
\mathtt {u}(\mathtt {f}^{c}_{e_n})=\left\{u_i \in \mathcal{N}_w \ \big||\mathtt {f}^{c}_{e_n} \cap \mathcal{W}_{u_i}| = 1~\text{and}~ R^c_{e_n} \leq R^c_{e_n,u_i} \right\}$\ignore{\footnote{We focuse only in this work on  rate adaptation aware IDNC for D2D-aided F-RAN. The transmission loss due to channel impairements is not considered and can be pursued in a future work.}}. 
Similarly, for D2D transmissions, the set of targeted users by $u_k$-th D2D transmitter is expressed as $\mathtt u(\mathtt {f}^{d2d}_{u_k})=\{u_j\in \mathcal N_w\big||\mathtt {f}^{d2d}_{u_k} \cap \mathcal{W}_{u_j}| = 1~\text{and}$ $ u_j\in \mathcal{Z}_{u_k} ~\text{and}~ r^{d2d}_{u_k} \leq r^{d2d}_{u_k,u_j}\}$.
Without loss of generality, the set of all targeted users when $|\mathcal N_\text{tra}|$\ignore{\footnote{The symbol $|\mathcal X|$ represents the cardinality of the set $\mathcal X$.}} D2D transmitters transmit the set of combinations $\mathtt {f}^{d2d}(\mathcal N_\text{tra})$ is represented by $
\mathtt {u}(\mathtt {f}^{d2d}(\mathcal N_{\text{tra}}))$, where $u_k$, $\mathtt {f}^{d2d}_{u_k}$, $\mathtt {u}({\mathtt f}^{d2d}_{u_k})$ are elements in $\mathcal N_\text{tra}$, $\mathtt {f}^{d2d}(\mathcal N_\text{tra})$, and $\mathtt {u}(\mathtt {f}^{d2d}(\mathcal N_\text{tra}))$, respectively.

\ignore{\begin{figure}[t]
\centering
\includegraphics[width=0.4\linewidth]{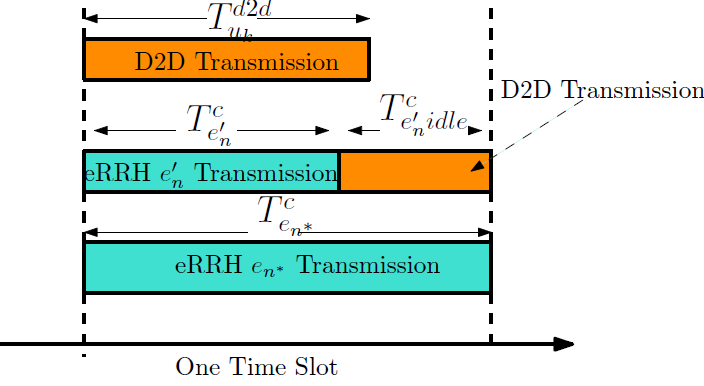}
\caption{Transmission time structure for  eRRHs and potential D2D transmitters for one time slot.}
\label{fig2}
\end{figure}}

\ignore{Hence, the set of potential transmitting users at $t$-th transmission is $\mathcal{A} \subset \mathcal{N}\setminus \tau_b$.  Let $\kappa_{k}$ denote the file combination to be sent by user $k\in \mathcal{A}$. Therefore, user $j\in \tau_{b'}, \forall b'\neq b$ is targeted by the transmission from the $k$-th user if and only if it can guarantee the successful download of a file  from $\mathcal {W}_j$ within the waiting time of any eRRH. Further, this downloaded file must be in the \textit{Has} set of user $k$. Then such file is  \emph{instantly decodable} for user $j$. As mentioned before, the same user cannot be targeted by both transmitting sets $\mathcal{K}$ and $\mathcal{A}$ at a time instant, i.e., $\tau_{\text{b}}\cap \tau{(\kappa(\mathcal{A}))}=\phi$. }

\subsection{Transmission Time Analysis and Expression of the Completion Time}

This subsection provides an analysis of the transmission time for sending coded files from the eRRHs and D2D transmitters to a set of scheduled users, which leads to an  expression of the completion time in D2D-aided F-RAN.
 
The transmission time for sending the coded file $\mathtt {f}^c_{e_n}$ from the $e_n$-th eRRH with rate $R^c_{e_n}$ to the set of targeted users $\mathtt u(\mathtt f^c_{e_n})$ is $T^c_{e_n}=\frac{B}{R_{e_n}}$ seconds, where $B$ is the size of the file in bits. Without loss of generality, let us assume that the $e_{n^{\ast}}$-th eRRH has the minimum rate at the $t$-th transmission slot that is denoted by $R_{e_{n^\ast}}$.  The corresponding transmission duration is $T^c_{e_{n^\ast}}=\frac{B}{R_{e_{n^\ast}}}$ seconds. Since different eRRHs will have different transmission rates, they will have different transmission durations. Thus, the portion of the time that not being used by  $e_{n^\prime}$-th eRRH at $t$-th transmission slot is referred to as the idle time of the $e_{n^\prime}$-th eRRH and denoted by  $T^c_{e_{n^\prime}\text{idle}}$. This idle time can be expressed as $T^c_{e_{n^\prime}\text{idle}}=(T^c_{e_{n^\ast}}-T^c_{e_{n^\prime}})$ seconds. Such idle time can be exploited by the scheduled users of $e_{n^\prime}$-th eRRH via D2D links if it ensures the complete delivery of files, i.e., $T^c_{e_{n^\prime}\text{idle}}\geq T^\text{d2d}_{u_m}$, where $T^\text{d2d}_{u_m}=\frac{B}{r^{d2d}_{u_k}}$ is the transmission duration for sending $\mathtt {f}^{d2d}_{u_m}$ from the $u_m$-th D2D transmitter with adopted rate $r^{d2d}_m$, $\forall u_m\in \mathtt u(\mathtt f^c_{e_{n'}})$. The unscheduled users by the eRRHs can also use D2D links to transmit files, and accordingly, the transmission duration for sending  $\mathtt {f}^{d2d}_{u_k}$ from the $u_k$-th D2D transmitter with adopted rate $r^{d2d}_k$ is $T^{d2d}_{u_k}=\frac{B}{r^{d2d}_{u_k}}$ seconds, $\forall u_k \notin \mathtt u(\mathtt f^c_{e_n}), \forall e_n\in \mathcal K$.
Based on the above discussion, $u_l$-th user experiences one of three possible delays at each transmission, as shown in Fig. \ref{fig2}, and described below.
\begin{enumerate}
\item The time delay for $u_l$-th user receiving a non-instantly decodable transmission  from $e_{n^\ast}$-th eRRH, this delay is $T^c_{e_{n^\ast},u_l}$, $\forall u_l \notin  \mathtt u(\mathtt f^c_{e_{n^\ast}})$.
\item The time delay for $u_l$-th user receiving a non-instantly decodable transmission from $e_{n^\prime}$-th eRRH, this delay is $T^c_{e_{n^\prime},u_l}$,  $e_{n^\prime} \in \mathcal{K}$ and $u_l \notin \mathtt u(\mathtt f^c_{e_{n^\prime}})$.
\item The time delay for $u_l$-th user being transmitting or receiving a non-instantly decodable transmission from any D2D transmitter in the set $\mathcal N_\text{tra}$, this delay is denoted as $T^{d2d}_{u_k,u_l}$, where  $(u_l=u_k) \in \mathcal N_\text{tra}~\text{or}~(u_l \notin \mathtt {u}(\mathtt {f}^{d2d}(\mathcal N_\text{tra}))~\text{and}~u_k \in \mathcal N_\text{tra})$.
\end{enumerate}
Note that for $u_l$-th user, $T^c_{e_{n^\prime},u_l}$ is less than $T^c_{e_{n^\ast},u_l}$ and $\max\limits_{{u_k}\in\mathcal N_\text{tra}} (T^{d2d}_{u_k,u_l})$. Thus, the maximum delay experienced by $u_l$-th user, which is not scheduled at the $t$-th
transmission slot, is equal to $T_{\max,t}=\max(T^c_{e_{n^\ast}}, \max\limits_{{u_k}\in\mathcal N_\text{tra}} (T^{d2d}_{u_k}))$. Consequently,
users that are not scheduled at transmission slot $t$, 
experience $T_{\max,t}$ seconds of delay in a cumulative manner defined as follows.

\begin{definition} A user with non-empty Wants set experiences $T_{\max,t}$ seconds of time delay if it does not receive any requested file at  $t$-th transmission slot. The accumulated time delay of  $u_l$-th user is the sum of $T_{\max,t}$ seconds at each transmission until $t$-th transmission slot, and denoted by $\mathbb{D}_{u_l,t}$. It can be expressed as \begin{align} \label{eq3}
\mathbb{D}_{u_l,t} = \mathbb{D}_{u_l,t-1}+
\begin{cases}
T_{\max,t}  & \text{if} ~u_l \notin  \left(\mathtt u(\mathtt f^{c}_{e_{n^\ast},t})\cup\mathtt u(\mathtt f^{c}_{e_{n^\prime},t})\right), \forall e_{n^\ast}, e_{n^\prime} \in \mathcal K\\
T_{\max,t} & \text{if} ~(u_l=u_k) \in \mathcal N_{\text{tra},t}~\text{or}~(u_l \notin \mathtt {u}(\mathtt {f}^{d2d}(\mathcal N_{\text{tra},t}))~\text{and}~u_k \in \mathcal N_{\text{tra},t})
\end{cases}
\end{align} 
\end{definition}

\begin{definition} The completion time of $u_l$-th user, denoted by $\mathtt T_{u_l}$, is the total time required in seconds to receive all its requested files. The overall completion time $\mathtt T_o$ is the time required to receive all files by all users, and is given by $\mathtt T_o=\max _{u_l \in\mathcal{N}_w}\{\mathtt T_{u_l}\}$.
\end{definition}

\begin{definition} A transmission schedule ${\mathcal{S}}=$ $\{( \mathtt f^{c}_{e_n,t},R^c_{e_n}), (u_k, \mathtt f^{d2d}_{u_k,t},r^{d2d}_{u_k})\}_{\forall t \in \{1,2,.......,\mathcal {j{\mathcal{S}}j}\}, \forall e_n\in\mathcal{K}, \forall u_k\in \mathcal N_{\text{tra},t}}$ is a collection of transmitting eRRHs/D2D transmitters, their file combinations and adopted rates at every $t$-th transmission index to receive all files by all users. 
\end{definition}

The completion time minimization
problem in a D2D-aided F-RAN system can be expressed
as follows
 \begin{align} \label{eq:Sopt1}
 \mathcal S^* &= \arg\min_{\mathcal S \in \mathbf{S}}\{\mathtt T_{o}(\mathcal S)\}    = \arg\min_{\mathcal S \in \mathbf{S}} \left \{\max_{u_l \in \mathcal N_w}\left\{ \mathtt T_{u_l}(\mathcal S)\right\}\right \},
 \end{align}
 where $\mathcal S^*$ is the schedule that optimally minimizes the overall completion time and $\mathbf S$ is the set of all possible
transmission schedules. The follwoing theorem expresses the  optimal schedule $\mathcal S^*$ in terms of time delay defined in definition 1.
\begin{theorem}\label{th:pmp}
The optimal schedule $\mathcal S^*$ that minimizes the overall
completion time in a D2D-aided F-RAN  system can be written as follows
 \begin{align} \label{eq:Sopt}
 \mathcal S^* = \arg\min_{\mathcal S \in \mathbf{S}} \left \{\max_{u_l \in \mathcal N_w}\left \{\frac{B.|\mathcal W_{u_l,0}|}{\tilde{R}_{u_l}(\mathcal S)} + \mathbb{D}_{u_l}(\mathcal S)\right \} \right \},
 \end{align}
where $|\mathcal W_{u_l,0}|$ is the initial \textit{Wants} size of $u_l$-th user, $\mathbb{D}_{u_l}(\mathcal S)$ is the accumulative time delay of $u_l$-th user in schedule, and $\tilde{R}_{u_l}(\mathcal S)$ is the harmonic mean of the  transmission rates of  transmissions  that   are instantly decodable for    $u_l$-th user in schedule $\mathcal S$.
\end{theorem}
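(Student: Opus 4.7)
The plan is to decompose each user's individual completion time $\mathtt{T}_{u_l}(\mathcal{S})$ under any schedule $\mathcal{S}$ into two disjoint components: the cumulative time that $u_l$ actually spends receiving instantly decodable transmissions, plus the cumulative time that $u_l$ spends transmitting, being idle, or receiving non-instantly-decodable transmissions. The first component I would rewrite via the harmonic-mean identity to recover $B\cdot|\mathcal{W}_{u_l,0}|/\tilde{R}_{u_l}(\mathcal{S})$, while the second equals $\mathbb{D}_{u_l}(\mathcal{S})$ directly from Definition 1. Substituting back into \eqref{eq:Sopt1} then produces \eqref{eq:Sopt}.

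First I would partition the set of transmission indices $\{1,\dots,|\mathcal{S}|\}$ into $\mathcal{T}^{\text{dec}}_{u_l}$, the slots in which $u_l$ instantly decodes a new wanted file (from either an eRRH or a D2D transmitter), and its complement $\mathcal{T}^{\text{del}}_{u_l}$. Since every successful reception removes exactly one file from $u_l$'s \textit{Wants} set and $u_l$ must drain that set from $|\mathcal{W}_{u_l,0}|$ down to zero, we have $|\mathcal{T}^{\text{dec}}_{u_l}|=|\mathcal{W}_{u_l,0}|$. For each $i\in\{1,\dots,|\mathcal{W}_{u_l,0}|\}$, let $R^{(i)}_{u_l}(\mathcal{S})$ denote the adopted transmission rate of the source delivering $u_l$'s $i$-th decoded file; the reception of that file occupies $B/R^{(i)}_{u_l}(\mathcal{S})$ seconds on $u_l$'s link. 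Summing and invoking the definition of the harmonic mean $\tilde{R}_{u_l}(\mathcal{S}) = |\mathcal{W}_{u_l,0}|/\sum_{i}1/R^{(i)}_{u_l}(\mathcal{S})$ collapses the total useful reception time to $B\cdot|\mathcal{W}_{u_l,0}|/\tilde{R}_{u_l}(\mathcal{S})$.

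Next I would apply Definition 1 to the delay slots: each $t\in\mathcal{T}^{\text{del}}_{u_l}$ satisfies exactly one of the two cases of \eqref{eq3} (either $u_l$ is unscheduled or receives a non-IDNC file from some eRRH, or $u_l$ itself transmits or is blocked inside a concurrent D2D group), so $\sum_{t\in\mathcal{T}^{\text{del}}_{u_l}} T_{\max,t} = \mathbb{D}_{u_l}(\mathcal{S})$. Adding the two sums gives $\mathtt{T}_{u_l}(\mathcal{S}) = B\cdot|\mathcal{W}_{u_l,0}|/\tilde{R}_{u_l}(\mathcal{S}) + \mathbb{D}_{u_l}(\mathcal{S})$, and substitution into \eqref{eq:Sopt1} yields \eqref{eq:Sopt}.

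The main obstacle will be the bookkeeping in the first step: I must justify that the two components are strictly additive, i.e., that the time $u_l$ consumes in a decoding slot is $B/R^{(i)}_{u_l}(\mathcal{S})$ rather than the full slot length $T_{\max,t}$, and that the residual intra-slot gap $T_{\max,t}-B/R^{(i)}_{u_l}(\mathcal{S})$ is absorbed by the \emph{other} users' delay accumulators through their own $T_{\max,t}$ increments in Definition 1, and therefore does not reappear in $u_l$'s completion time. Once this non-double-counting argument is in place, the remaining derivation is purely the harmonic-mean algebra above, and the outer $\max_{u_l\in\mathcal{N}_w}$ from \eqref{eq:Sopt1} carries through unchanged to give \eqref{eq:Sopt}.
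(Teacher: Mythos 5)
Your proposal follows essentially the same route as the paper's (sketched) proof: decompose each user's completion time into the instantly decodable reception times, which collapse to $B\cdot|\mathcal W_{u_l,0}|/\tilde{R}_{u_l}(\mathcal S)$ via the harmonic mean once you note that exactly $|\mathcal W_{u_l,0}|$ such receptions occur, plus the non-instantly decodable slots whose $T_{\max,t}$ increments match Definition~1, before extending the single-transmitter PMP result of \cite{18n} to multiple eRRH/D2D transmitters. Your explicit flagging of the intra-slot residual $T_{\max,t}-B/R^{(i)}_{u_l}(\mathcal S)$ is in fact more careful than the paper's sketch, which glosses over exactly that additivity issue (and is why the paper writes the per-slot version in Corollary~1 with an $\approx$).
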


\begin{proof}
The proof of  \thref{th:pmp} is omitted in this paper due to the space limitation\ignore{However, we can use  the same steps that proofed Theorem 1 in  \cite{21n} for C-RAN networks.  It is important to  note that Theorem 1 in  \cite{21n}  considered one transmission rate for all RRHs unlike  \thref{th:pmp} that different transmission rates and powers for  D2D-aided F-RAN system,  which makes the statistics of the harmonic mean $\tilde{R}_{u_l}(\mathcal S)$ different.} and only a sketch of the proof is given as follows. We first show that the completion time can be expressed as the sum of instantly and non-instantly decodable transmission times from $|\mathcal K|$ and $|\mathcal N_{\text{tra}}|$ transmitters via cellular and D2D links, respectively. Afterward, we need to proof that the number of instantly decodable transmissions to $u_l$-th user is equal to the number of its requested files $|\mathcal W_{u_l,0}|$ and
the number of non-instantly decodable transmissions matches
the time delay in definition 1. Finally, we extend the results of the optimal schedule in Theorem
1 in \cite{18n} that used in PMP system with a single transmitter to the coordinated D2D-aided F-RAN setting with multiple transmitters.
\end{proof}

Solving the completion time problem in \eref{eq:Sopt1} optimally is intractable \cite{21n}. In fact, the
transmission schedule at the current transmission slot does not depend only on the future transmission schedules, but also on users' achievable capacities and eRRHs' transmit powers. Therefore, we pay our special attention to solve such problem at each transmission, where  files are transmitted with high transmission rates. If some eRRHs cannot send XOR files to a set of users with the rate threshold $R_{\text{th}}$, these users can be scheduled on D2D links. To this end, our main objective is to minimize the completion time at each transmission, known as the anticipated  completion time \cite{16n}, through minimizing the time delay. This anticipated user's completion time at each transmission in D2D-aided F-RAN system is given in the next corollary.
\begin{corollary} \label{cor:LBcompletion}
The anticipated completion time of $u_l$-th user at $t$-th transmission slot is given by
 \begin{align}\label{9}
   {\mathtt T}_{u_l,t} \approx  \frac{B.|\mathcal W_{u_l,0}|}{\tilde{R}_{u_l,t}}  + \mathbb{D}_{u_l,t},
  \end{align}
where $\mathbb{D}_{u_l,t}$ is the accumulative transmission delay as given in \eref{eq3}, and  $\tilde{R}_{u_l,t}$ is the harmonic mean of the transmission rates that are instantly decodable for $u_l$-th user until $t$-th
transmission.
\end{corollary}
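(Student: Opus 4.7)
The plan is to derive Corollary 1 as a transmission-slot specialization of Theorem 1. Theorem 1 characterizes the optimal schedule through the per-user quantity $\frac{B \cdot |\mathcal W_{u_l,0}|}{\tilde R_{u_l}(\mathcal S)} + \mathbb D_{u_l}(\mathcal S)$ evaluated over the full (and unknown) future schedule $\mathcal S$, so my goal is to reinterpret this quantity as an ``anticipated'' value $\mathtt T_{u_l,t}$ that a greedy controller can actually compute at slot $t$ using only the information revealed up to that slot.

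First, I would recall from the sketch of Theorem 1 that user $u_l$'s completion time decomposes additively into the total time consumed by transmissions that are instantly decodable for $u_l$ and the total time consumed by transmissions that are not. The non-instantly-decodable contributions are exactly the per-slot penalties $T_{\max,t}$ of Definition 1, so they accumulate to $\mathbb D_{u_l,t}$ and require no approximation since the history is observable at slot $t$.

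Second, I would count the instantly-decodable transmissions. By the defining property of IDNC, any transmission (from an eRRH over cellular or from a neighbor over a D2D link) that is instantly decodable for $u_l$ delivers exactly one new file from $\mathcal W_{u_l,0}$. Hence, over the life of the schedule, the total number of such transmissions equals $|\mathcal W_{u_l,0}|$. If $R_1, R_2, \ldots, R_{|\mathcal W_{u_l,0}|}$ denote the rates adopted on those transmissions, the cumulative useful time is
\begin{equation}
\sum_{i=1}^{|\mathcal W_{u_l,0}|} \frac{B}{R_i} \;=\; \frac{B \cdot |\mathcal W_{u_l,0}|}{\tilde R_{u_l}},
\end{equation}
where the right-hand side is the harmonic-mean rewriting used in Theorem 1.

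Third, at slot $t$ the controller does not know the rates of the $|\mathcal W_{u_l,t}|$ instantly-decodable transmissions that are still to come, so I would replace $\tilde R_{u_l}$ by its running counterpart $\tilde R_{u_l,t}$, i.e., the harmonic mean of the rates of the instantly-decodable transmissions that have already reached $u_l$ by slot $t$. This substitution is the principal obstacle of the argument: it is an approximation rather than an equality, which is exactly why Corollary 1 states the result with an ``$\approx$'' rather than an ``$=$''. The approximation is justified under the standard online-scheduling working assumption \cite{16n} that the channel statistics are stationary across slots, so that the empirical harmonic mean of past rates is a faithful predictor of the harmonic mean of the full sequence of $|\mathcal W_{u_l,0}|$ instantly-decodable rates. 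Plugging $\tilde R_{u_l,t}$ and the exact delay $\mathbb D_{u_l,t}$ from Definition 1 into the per-user expression of Theorem 1 yields \eref{9}, which completes the plan.
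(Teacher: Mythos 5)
Your proposal is correct and follows essentially the same route the paper takes: the paper presents the corollary as the slot-$t$ ``anticipated'' evaluation of the per-user expression in Theorem~1, obtained by keeping the observable accumulated delay $\mathbb{D}_{u_l,t}$ exact and replacing the full-schedule harmonic mean by its running counterpart $\tilde{R}_{u_l,t}$, which is precisely your decomposition into instantly and non-instantly decodable transmission times with the count $|\mathcal W_{u_l,0}|$ and the harmonic-mean rewriting. The paper gives no further formal proof beyond this, so your reconstruction (including the remark that the substitution is the source of the ``$\approx$'') matches its intent.
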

The anticipated completion time in \corref{cor:LBcompletion} depends on the number of requested files by $u_l$-th user, its accumulated time delay and harmonic mean $\tilde{R}_{u_l,t}$. Clearly, this metric is intimately related to the duration of time that all files are delivered to all users, which can be illustrated in the following example.\\
\begin{figure}[t!]
      \centering
      \begin{minipage}{0.494\textwidth}
          \centering
         \includegraphics[width=0.65\textwidth]{fig2nn.png} % first figure itself
          \caption{Transmission time structure for  eRRHs and potential D2D transmitters for one time slot.}
          \label{fig2}
      \end{minipage}\hfill
      \begin{minipage}{0.494\textwidth}
          \centering
         \includegraphics[width=0.85\textwidth]{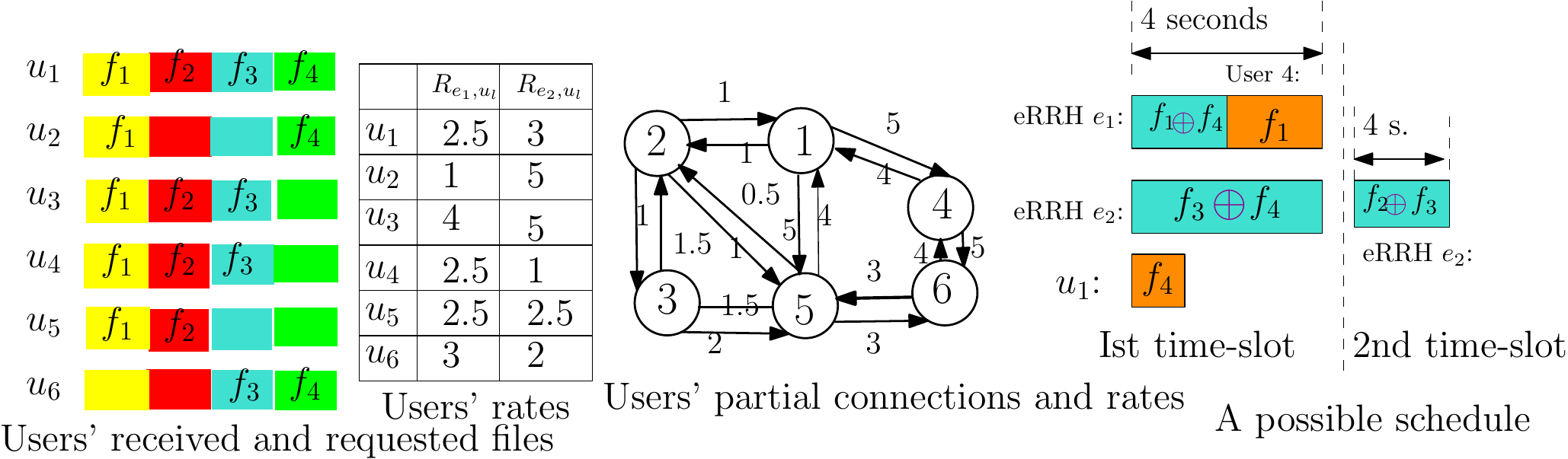} % second figure itself
          \caption{D2D-aided F-RAN system containing $6$ users and their corresponding requested/received files and rates.}
          \label{fig3}
       \end{minipage}
  \end{figure} 
\ignore{  
\begin{figure}[t]
\centering
\includegraphics[width=0.5\linewidth]{./fig/fig1nnnn.pdf}
\caption{D2D-aided F-RAN system containing $6$ users and their corresponding requested/received files and rates. For example, $u_2$ receives $f_1$, $f_4$ and requests $f_2$, $f_3$. The sets of files that stored in eRRHs' caches are: $\mathcal C_{e_1}=\{f_1, f_4, f_3\}$, $\mathcal C_{e_2}=\{f_2, f_3, f_4\}$.}
\label{fig3}
\end{figure}}
\textbf{Example 1:} This example considers the model in  Fig. \ref{fig3} that consists of $2$ eRRHs, $6$ users, users' received and requested files and their rates.  For example, $u_2$ receives $f_1$, $f_4$ and requests $f_2$, $f_3$. The sets of files that stored in eRRHs' caches are $\mathcal C_{e_1}=\{f_1, f_4, f_3\}$, $\mathcal C_{e_2}=\{f_2, f_3, f_4\}$. Each file is assumed to have a size of $10$ bits. To minimize the completion time for this example, one possible schedule is given as follows.\\
\textbf{First time slot:} The $e_1$-th and $e_2$-th eRRHs transmit $\mathtt f^{c}_{e_1,1}=f_1 \oplus f_4$ and $\mathtt f^c_2=f_3 \oplus f_4$ with rates $R^c_{e_1}=2.5$ and $R^c_{e_2}=5$ bits/s, respectively, to the sets  $\mathtt u(\mathtt f^{c}_{e_1,1})=\{u_4,u_6\}$ and $\mathtt u(\mathtt f^{c,1}_{e_2})=\{u_2,u_3\}$. The $u_1$-th user transmits $\mathtt f^{d2d}_{u_1,1}=f_4$ with rate $r^{d2d}_{u_1}=5$ bits/s to the set $\mathtt u(\mathtt f^{d2d}_{u_1,1})=\{u_5\}$. Given this, we have the following transmission durations of $e_1$-th eRRH, $e_2$-th eRRH, and $u_1$-th transmitting user, respectively: $T^c_{e_1}=\frac{10}{2.5} = 4, T^c_{e_1}= \frac{10}{5} = 2, T^{d2d}_{u_1}=\frac{10}{5} = 2$ seconds.  Since user $u_4$ receives $f_4$ from $e_1$-th eRRH in $4$ seconds, it can use the idle time of $e_1$-th eRRH, i.e., $T^c_{e_1\text{idle}}=2$ seconds, to send $f_2$ to $u_6$ with rate $r^{d2d}_{u_4}=5$ bits/s. Therefore, the updated \textit{Wants} sets after the first time slot are: $\mathcal W_{u_2,1} = \{f_2\}$, $\mathcal W_{u_3,_1} = \varnothing, \mathcal W_{u_4,1} = \varnothing$, $\mathcal W_{u_5,1} = \{f_3\}$, $\mathcal W_{u_6,1} = \varnothing$. Note that $T_{\max,1}=\max(T^c_{e_1},T^c_{e_2},T^{d2d}_{u_1}) = 4$ seconds.\\
\textbf{Second time slot:} The $e_2$-th eRRH transmits $\mathtt f^{c,2}_{e_2}=f_2 \oplus f_3$ with rate $R^c_{e_2}=2.5$ bits/s to the set $\mathtt u(\mathtt f^{c,2}_{e_2})=\{u_2,u_5\}$ which requires transmission time  $T^c_{e_2}=T_{\max,2}=\frac{10}{2.5} = 4$ seconds. By the end of second time slot, all users will have their wanted files. Therefore, the total transmission time is $ T_{\max,1} + T_{\max,2}=8$ seconds.
\ignore{Let the transmission schedule $\mathcal{S}_2$ of the work proposed in \cite{21n} be:
\begin{enumerate}
\item \textbf{First time slot:} eRRH$_1$ and eRRH$_2$ adopt the transmission rate to $2.5$ bits/s to transmit coded  file $\mathtt f^c_1=f_1 \oplus f_3$ to serve users $6$, $5$, and transmit uncoded  file $f_2$ to serve user $2$, respectively. User $1$ can send $f_4$ to user $4$ with the same adopted eRRHs' transmission rate of $2.5$ bits/s. The  transmission time is $T^{c}_1=T^{c}_2=T^{d2d}_{1}=\frac{10}{2.5} = 4$ seconds. Subsequently, the \textit{Wants} sets are: $\mathcal W_2 = \{f_3\}$, $\mathcal W_3 = \{f_4\}, \mathcal W_4 = \{\varnothing\}$, $\mathcal W_5 = \{f_4\}$, $\mathcal W_6 = \{f_2\}$.

\item \textbf{Second time slot:} eRRH$_1$ and eRRH$_2$ adopt the transmission rate to $1.5$ bits/s to transmit coded  file $\mathtt f^c_1=f_2 \oplus f_4$ to serve  users $3$, $5$, $6$, and transmit uncoded file $f_3$ to serve  user $2$, respectively. The  transmission time $T_2$ is $\frac{10}{1.5} = 6.666$ seconds. Subsequently, all  users have received their wanted files.
\end{enumerate}
Schedule $\mathcal{S}_2$ requires total transmission time of $ T^1_{\max}+ T^2_{\max}= 10.666$ seconds. With all these results, it can be concluded that schedule $\mathcal{S}_1$ minimizes the completion time in seconds compared to schedule $\mathcal{S}_2$.
The above example demonstrates the benefit of NC, 
flexibility of eRRHs to utilize multiple rates simultaneously, and D2D communications in minimizing the completion time. We can  further improve this result by allocating the power levels efficiently to the eRRHs.}
\section{Problem Formulation and Problem Decomposition}\label{PFPD}
\ignore{In this section, we first formulate the completion
time minimization problem in D2D-aided F-RAN at every transmission $t$. Given the intractability of the problem, we decompose it
in Section \ref{PD}.}\ignore{For D2D transmissions, the power allocations of transmitting users are assumed to be pre-determined.}
\subsection{Problem Formulation}
In order to minimize the completion time at each transmission slot, we need to develop a rate-aware network
coding framework that decides: i) the adopted transmission
rate/power at the $e_n$-th eRRH, $\{R^c_{e_n}, P_{e_n}\}$, to transmit its XOR combination $\mathtt f^{c}_{e_n,t}$ to a set of targeted users $\mathtt u(\mathtt f^{c}_{e_n,t})$, $\forall e_n\in\mathcal K$, and ii) the set of D2D transmitters $\mathcal N_{\text {tra},t}$ for sending $\mathtt f^{d2d}_{u_k,t}$ to the users $\mathtt u(\mathtt f^{d2d,t}_{u_k})$, and their adopted transmission rates $r^{d2d}_{u_k}, \forall u_k\in \mathcal N_{\text{tra},t}$. As
such, all files are delivered to all users with minimum completion time. Therefore, the completion time minimization problem in D2D-aided F-RAN system can be formulated as
\begin{subequations}
\begin{align} \nonumber \label{eqn:Mschedule}
&\rm P1: \hspace{0.2cm} \min_{\substack{\mathtt f^{c}_{e_n,t}, \mathtt f^{d2d}_{u_k,t}, P_{e_n}, \mathcal N_{\text{tra},t}\in \mathcal P(\mathcal{N})
 }} \left \{\max_{u_l \in \mathcal N_{w,t}} \mathtt T_{u_l,t}\right \}\\ \nonumber
   &\rm subject~to 
 \begin{cases}
    \text{C1:} \hspace{0.1cm} \mathtt u(\mathtt f^{c}_{e_n,t}) \cap \mathtt u(\mathtt f^{c}_{e_{n^\prime},t}) =\varnothing, \forall e_n \neq {e_{n^\prime}} \in \mathcal K;\\
     \text{C2:}\hspace{0.1cm} \mathtt u(\mathtt f^{d2d}_{u_k,t}) \cap \mathtt u(\mathtt f^{d2d}_{u_{k^\prime},t}) =\varnothing \mbox{ \& } \mathtt u(\mathtt f^{d2d}_{u_k,t}) \cap \mathtt u(\mathtt f^{c}_{e_n,t})=\varnothing, \forall u_k \neq {u_{k^\prime}} \in \mathcal N_{\text{tra},t}, e_n\in \mathcal K;\\
      \text{C3:}\hspace{0.1cm} r^{d2d}_{u_k}.T^{c}_{{e_{n^\prime}\text{idle}}}\geq B, ~\forall u_k\in \mathcal N_{\text{tra},t},~\forall e_{n^\prime}\in \mathcal{K}; \\
       \text{C4:}\hspace{0.1cm}\mathtt f^{c}_{e_n,t}\subseteq \mathcal P( \mathcal{C}_{e_n}) \mbox{ \& } \mathtt f^{d2d}_{u_k,t}\subseteq \mathcal P( \mathcal{H}_{u_k,t}), ~\forall (e_n,u_k)\in \mathcal{K}\times\mathcal N_{\text{tra},t};  \\
       \text{C5:}\hspace{0.1cm} 0\leq P_{e_n}\leq P_{\max}, \forall e_n\in \mathcal{K};
        \text{C6:}\hspace{0.1cm} R^c_{e_n}\geq R_{\text{th}}; \text{C7:}\hspace{0.1cm} r^{d2d}_{u_k}\geq R_{\text{th}}, \forall e_n\in \mathcal{K}, \forall u_k\in \mathcal N_{\text{tra},t}. 
\end{cases}
       \end{align}
\end{subequations}
\ignore{where $\mathcal P(\mathcal N)$ denotes the power set of the D2D transmitters.} The constraints are explained as follows. C1 states that the set of scheduled users to all eRRHs
are disjoint, i.e., each user must be scheduled to only one eRRH.  C2  makes sure that each user can be scheduled to only one potential D2D transmitter and no user can be scheduled to a D2D transmitter and eRRH at the same time instant. C3 ensures the successful delivery of files from D2D transmissions within the idle time of the eRRHs. C4 ensures that all files to be combined using
XOR operation at all eRRHs and D2D transmitters are stored in their \textit{Caches} and \textit{Has} sets, respectively.\ignore{ C5 says that each user in the network has the opportunity to send coded file over D2D links except users in the set $\mathtt u^c_e$.}  C5 bounds the maximum transmit power of each eRRH, and C6 and C7 satisfy the minimum transmission rates required to meet the QoS rate requirement $R_{\text{th}}$.\ignore{ If some eRRHs cannot meet this rate threshold $R_{\text{th}}$, they will be offloaded and users will be scheduled on D2D links.}

The optimization problem in P1 contains the NC scheduling parameters $\mathtt u(\mathtt f^{c}_{e_n,t}), \mathtt u(\mathtt f^{d2d}_{u_k,t}), \forall e_n\in \mathcal K, \forall u_k\in \mathcal N_{\text{tra},t}$, power allocations of eRRHs  $P_{e_n},~\forall e_n\in \mathcal K$, potential set of transmitting users $\mathcal N_{\text{tra},t}$ and their transmission rates.  We can readily show that problem P1 is NP-hard and intractable \cite{40}.   However, by analyzing the problem, we can decompose it into two subproblems and solve them individually  and efficiently using graph theory technique \cite{32}.

\subsection{Problem Decomposition}\label{PD}
Since the main objective is to minimize the maximum completion time of users, which depends on the time delay increase at each transmission slot, we can first focus on minimizing the transmission duration for the eRRH-user NC transmissions. In particular, we can get the possible completion time by jointly optimizing the NC user scheduling and power allocations of eRRHs. The mathematical formulation for minimizing the transmission duration for eRRH-user NC transmissions can be expressed as
\begin{subequations}
\begin{align} \label{eqn:Mschedule}
&\rm P2: \min_{\substack{ 0\leq P_{e_n}\leq P_{\max}}} T^c_{e_n,t} \\ \nonumber
   &{\rm subject ~to\ } 
   \begin{cases}
   \mathtt u(\mathtt f^{c}_{e_n,t})\cap \mathtt u(\mathtt f^{c}_{e_{n^\prime},t}) =\varnothing, \forall e_n \neq {e_{n^\prime}} \in \mathcal K;\nonumber\\
 \mathtt f^{c}_{e_n,t}\subseteq \mathcal P( \mathcal{C}_{e_n}); \hspace{0.1cm} R^c_{e_n}\geq R_{\text{th}}, ~\forall e_n\in \mathcal{K}. \nonumber
       \end{cases}
      \end{align}
\end{subequations}
Note that this subproblem contains users' associations and power allocation variables and a joint solution will be developed in \sref{JS}.

After obtaining the possible transmission duration from eRRH-user NC transmissions, denoted by $T^{c}_{e_{n^\ast},t}$ of $e_{n^\ast}$-th eRRH, by solving P2, we can now formulate the second subproblem. In particular, we can maximize the number of users $Z_t$ that are not been scheduled to the eRRHs $\mathcal N_{w,t}\backslash \mathtt u(\mathtt f^{c}_{e_{n^\ast},t})$ within $T^{c}_{e_{n^\ast},t}$ by using D2D communication. In addition, users being scheduled to  $e_{n^\prime}$-th eRRH from subproblem P2, have the opportunity to be scheduled on D2D links within the idle times of their corresponding eRRHs at the $t$-th transmission slot, $\forall e_{n^\prime}\neq e_{n^\ast}\in \mathcal K$. Therefore, the second subproblem of maximizing the number of users to be scheduled on D2D links can be expressed as follows
%\begin{table*}
%\vspace*{-0.8cm}
%\begin{normalsize}  
\begin{subequations}
\begin{align} \label{eqn:Mschedule}
&\rm P3: \max_{\substack{ \mathcal N_{\text{tra},t}\in \mathcal P(\mathcal{N}\backslash \mathtt u(\mathtt f^{c}_{e_{n^\ast},t}))\\ \mathtt f^{d2d}_{u_k,t}\subseteq \mathcal P(\mathcal{H}_{u_k,t})}} Z_t \\ \nonumber
   &{\rm subject ~to\ } 
\begin{cases}
     (\text{C2});     r^{d2d}_{u_k}.(T^{c}_{e_{n^\ast},t}-T^c_{e_{n^\prime},t})\geq B, ~\forall u_k\in \mathcal N_{\text{tra},t},~\forall (e_{n^\ast}, e_{n^\prime})\in \mathcal{K}; \\
     T^{d2d}_{u_k}\leq T^{c}_{e_{n^\ast},t}, ~\forall u_k\in \mathcal N_{\text{tra},t}; \\
     |\mathtt {u}(\mathtt {f}^{d2d}(\mathcal N_{\text{tra},t}))|+|\mathcal N_{\text{tra},t}| \leq Z_t.
\end{cases}
\end{align}
\end{subequations}
%\end{normalsize}
%\vspace*{-0.8cm}
%\hrulefill
%\end{table*}

The constraint C3 in P1 is rewritten as the second constraint in P3 since we know $T^{c}_{e_{n^\ast},t}$. Fourth constraint  states that the transmission duration of any D2D transmitter should be less than or equal to $T^{c}_{e_{n^\ast},t}$. The last constraint is the maximum number limitation of scheduled users on D2D links. It can be easily observed that P3 is a
D2D scheduling problem that considers selection of D2D transmitters, their NC files and transmission rates.\ignore{ In the next section, we propose a graph-based approach to transform and solve the decomposed subproblems in P2 and P3 efficiently. }

\ignore{\begin{table*}
\vspace*{-0.8cm}
\begin{normalsize}  
\begin{subequations}
\begin{align} \label{eqn:Mschedule}
&\rm P3: \max_{\substack{ \mathcal N_\text{tra}^t\in \mathcal P(\mathcal{N}\backslash \mathtt u(\mathtt f^{c,t}_{e_n}))\\ \mathtt f^{d2d,t}_{u_k}\subseteq \mathcal P(\mathcal{H}^t_{u_k})}} Z^t \\ \nonumber
   &{\rm subject ~to\ } 
\begin{cases}
     (\text{C2});     r^{d2d}_{u_k}.(T^{*,t}_{e_n}-T^c_{e_{n^\prime}\text{wait}})\geq B, ~\forall u_k\in \mathcal N_\text{tra}^t,~\forall (e_n, e_{n^\prime})\in \mathcal{K}; \\
     T^{d2d}_{u_k}\leq T^{*,t}_{e_n}, ~\forall u_k\in \mathcal N_\text{tra}^t; \\
     \sum _{u_k\in \mathcal N_\text{tra}^t} |\mathtt u(\mathtt f^{d2d,t}_{u_k})|+|\mathcal N_\text{tra}^t| \leq Z^t.
\end{cases}
\end{align}
\end{subequations}
\end{normalsize}
\vspace*{-0.8cm}
\hrulefill
\end{table*}}
\section{Completion Time Minimization:  Joint Approach}\label{JA}
In this section, we propose a  joint approach to 
solve the subproblems in P2 and P3 using designed interference-aware IDNC and new D2D conflict graphs, respectively. Specifically, we design interference-aware IDNC in the first subsection to solve the subproblem P2\ignore{ by optimizing the network-coded scheduling and power level of each eRRH, as shown} in the second subsection. We then introduce a new D2D conflict graph to solve the subproblem P3 as shown in the third and fourth subsections, respectively.

\ignore{This section proposes an efficient joint approach that consists of two sequential solutions for solving the subproblems in P2 and P3 using designed interference-aware IDNC and new D2D conflict graphs, respectively.}
%\subsection{Heuristic Solution 1}

\subsection{Subproblem P2 Transformation: Interference Aware-IDNC Graph}\label{IA}
Interference-Aware IDNC (IA-IDNC) graph, denoted by $\mathcal{G}_\text{IA-IDNC}(\mathcal V, \mathcal E)$, is designed to systematically select an IDNC combination, transmission rate, and power allocation of each eRRH at the $t$-th transmission slot. Unlike the graph in \cite{21n} that resulted in one rate for fixed power eRRHs, our designed IA-IDNC graph leads to different transmission rates/powers from different eRRHs. This gives flexibility to each eRRH to choose its IDNC combination and transmission rate that satisfy a set of scheduled users. 

Consider  generating all possible associations (pairs) representing users and their corresponding requested files that cached by $e_n$-th eRRH, denoted by $\mathcal A_{e_n}=\mathcal N_{w}\times \mathcal C_{e_n}$, i.e., $a\in \mathcal A_{e_n}=(u_l,f_h)$ represents the association of $u_l$-th user and its $f_h$-th requested file. The corresponding files of a set of associations in $\mathcal A_{e_n}$ can be encoded into one IDNC combination if these files are instantly decodable to the corresponding associated users. The set of all IDNC combinations is denoted by $\mathcal A_{e_n,\text{IDNC}}$. In particular, the corresponding files of any two different associations $a\in \mathcal A_{e_n}$ and $a^\prime \in \mathcal A_{e_n}$ are encoded if one of the following IDNC conditions is satisfied.
\begin{itemize}
\item \textbf{IDNC-C1:} $u_{l,a} \neq u_{l^\prime,a^\prime}$ and $f_{h,a} = f_{h,a^\prime}$ This condition represents that the same file $f_h$ is requested by two distinct  users $u_l$ and ${u_{l^\prime}}$. 
\item \textbf{IDNC-C2:} $u_{l,a} \neq u_{l^\prime,a^\prime}$ and $ f_{h^\prime,a^\prime} \in  \mathcal H_{u_l,a}$ and $f_{h,a} \in  \mathcal H_{u_{l^\prime},a^\prime}$. This condition represents that different files $f_{h^\prime}$ and $f_{h}$ are requested by two different users $u_{l^\prime}$ and $u_{l}$, respectively. Meanwhile, the requested file of each user is in the \textit{Has}
set of the user in the other association. We use $l,a$ in $u_{l,a}$ as subscripts to represent $u_l$-th user in
$a$-th association.
\end{itemize}
For example, the element $\mathtt a=(\mathtt f^{c}_{e_n},\mathtt u(\mathtt f^{c}_{e_n}))\in \mathcal A_{e_n,\text{IDNC}}$ represents the set of scheduled users $\mathtt u(\mathtt f^{c}_{e_n})$  that will receive the IDNC combination $\mathtt f^{c}_{e_n}$ from $e_n$-th eRRH.

Let $\mathcal{S}_{e_n}$ be the set of all possible associations between the IDNC combinations $ \mathcal A_{e_n,\text{IDNC}}$ and achievable capacities $\mathcal{R}_{e_n}\subset \mathcal{R}$, i.e., $\mathcal{S}_{e_n}=  \mathcal A_{e_n,\text{IDNC}}\times\mathcal{R}_{e_n}$. In other words, $\mathbf S=(\mathtt a, R)\in \mathcal{S}_{e_n}$ is a schedule that consists of a set of associations representing the IDNC combination, set of scheduled users, and rate  $R$ of $e_n$-th eRRH, i.e., $\mathbf S= (\mathtt a, R)= s_1,s_2,\cdots, s_{|\mathbf S|}$, where $|\mathbf S|$ is the total number of scheduled users in $\mathbf S$. Note that $s_1$ represents one user, one file, and rate of $e_n$-th eRRH\ignore{, and accordingly, each association $s^j_{e_n,i}\in \mathbf S_{e_n,i}$ is represented by $s_{e_n,R_{e_n},u_l,f_h}$. All feasible schedules $\mathbf S_{e_n,i}$ of the $e$-th eRRH  is represented by the set $\mathcal S_{e_n,t}$}. 
Now, any two
associations $s_1 \in \mathbf S, s_2\in  \mathbf{S}$  representing the $e_n$-th eRRH should have an equal adopted rate that is greater than or equal to $R_\text{th}$. That is, the \textbf{Rate Condition (RC)} is satisfied $ R_{s_1} = R_{s_2}$ and $R_{s_1}\geq R_\text{th}$. 

The aforementioned procedures are applied to all eRRHs in the network. Thus, the set of all possible IDNC combinations $\mathcal{A}_{e_n,\text{IDNC}}$ and schedules $\mathcal S_{e_n}$ in the network are $\mathcal{A}_{\text{IDNC}} = \bigcup\limits_{e_n\in \mathcal{K}}\mathcal{A}_{e_n,\text{IDNC}}$ and $\mathcal{S} = \bigcup\limits_{e_n \in \mathcal{K}}\mathcal{S}_{e_n}$, respectively. These schedules $\mathcal{S}$ can be exactly represented by unique vertices $ \mathcal V$ in $\mathcal G_\text{IA-IDNC}(\mathcal V,\mathcal E)$ such that
we transfer the subproblem P2 to a graph-theory based problem.\ignore{ Graph theory techniques are sufficient with
numerous proven theorems that capable of efficiently analyzing large graphs representing complex
problems \cite{GT}.} Therefore, the $\mathbf S_{i}$-th schedule in $\mathcal{S}$ is represented by the $V_i$-th unique vertex  in $\mathcal G_{\text{IA-IDNC}}$ ($i=1,2,\cdots,|\mathcal S|$). This schedule-to-vertex mapping makes any IDNC combination sent from  the $e_n$-th eRRH with adopted rate to its corresponding associated users  is decodable.

Two vertices $V_i$ and $V_{i\prime}$ representing two different schedules $\mathbf{S}_{i}\in \mathcal S_{e_n}$ and $\mathbf{S}_{i^\prime} \in \mathcal{S}_{e_{n^\prime}}$ are adjacent by an edge in $\mathcal G_{\text{IA-IDNC}}$,  if the associations they represent satisfy the
following condition.
\begin{itemize}
 \item  \textbf{Transmission Condition (TC):} $\mathtt u\cap \mathtt u'=\phi$, $\forall (s_1,s_2)\in \mathbf S_i \times \mathbf S_{i^\prime}$. This condition ensures that the same user can be scheduled only to a unique eRRH.
\end{itemize}
Assuming that the power allocation of the eRRHs in the
network will be computed later; then the weight of a given vertex $V$ representing   a schedule $\mathbf{S}$ is expressed by
\begin{equation}
\label{eq10}
w(V)=\sum_{s\in \mathbf{S}}\frac{\min_{u_{l,s}\in \mathtt u(\mathtt f^{c}_{e_n})}\log_{2}(1+\text{SINR}_{e_{n,s},u_{l,s}}(\textbf{P}))}{B}\ignore{~ \forall k=1,2,\cdots, |\mathtt u(\mathtt f^{c}_{e_n,t})|,}.
\end{equation}
The weight of each vertex reflects
the contribution of each eRRH towards minimizing the completion time of its associated users. Actually, the transmission rate plays a crucial role in minimizing the transmission duration $T^c_{e_n}$.\ignore{for transmitting  the IDNC file $\mathtt f^{c}_{e_n}$ to set of users $\mathtt u(\mathtt f^{c}_{e_n})$. This is equivalent to schedule users $\mathtt u(\mathtt f^{c}_{e_n})$ to $e_n$-th eRRH with the maximum possible
transmission rate in  \eref{eq10}.} Thus, a larger value in \eref{eq10}  leads to minimize the transmission durations of delivering IDNC files to users, which minimizes their completion times. 

The design of the IA-IDNC graph makes any  maximal weight clique\ignore{\footnote{A maximal clique in $\mathcal G_\text{IA-IDNC}$ is a clique that consists of pairwise adjacent vertices and cannot be extended to
include one more vertex without violating the pairwise adjacent conditions. The 
maximum weight clique is the maximal clique with the largest sum of its vertices’ weights.}} represents a set of transmissions satisfying the following three features: i) each user is scheduled only to a single eRRH that cached one of its requested files,
ii) each eRRH delivers an IDNC file with an adopted transmission rate/power that satisfies a lower completion time for a potential set of users. Such  adopted rate satisfies the QoS rate guarantee and no larger than the channel  capacities of all scheduled users,
iii)  the  weight of each vertex strikes a balance between  the adopted rate and the number of scheduled  users to each eRRH.

The following theorem characterizes
the solution of subproblem P2 based on the designed IA-IDNC graph.
\begin{theorem} \label{thm:Mweight}
 The transmission duration minimization  subproblem P2 is equivalently represented by the maximum weight
 clique problem
 in the IA-IDNC graph, and can be expressed as 
 \begin{align}
 \ignore{& \arg \max_{\substack{ \\ \mathtt f^{c}_{e_n,t} \in \mathcal{P}(\mathcal{C}_{e_n}) \\ R^c_{e_n} \in \mathcal{R}_t \\ P_{e_n} \in \{0,...,P_{\max}\}}} \sum_{e_n \in \mathcal K}  \sum_{s_{e_n,R^c_{e_n},u_l,f_h}\in  \textbf{S}_{e_n,i}}\frac{\min_{u_l\in \mathtt u(\mathtt f^{c,t}_{e_n}(\textbf{S}_{e_n,i}))}\log_{2}(1+\text{SINR}_{e_n,u_l}(\textbf{P}))}{B}\\
 &=   \arg\max_{\mathtt S_t \in \mathcal S_t} \sum_{e_n \in \mathcal K} \sum_{\textbf{S}_{e_n,i} \in \mathtt S_t} w (\textbf{S}_{e_n,i})} =\arg\max_{\mathtt C \in \mathcal C} \sum_{V_i \in \mathtt C} w (V_i),~ \forall i=1,2,\cdots, |\mathtt C|,
 \end{align}
 where $\mathtt C$ is the maximum weight clique of a maximum degree $|\mathcal K|$ in the IA-IDNC graph and $\mathcal C$ is the
 set of all possible maximal cliques.\ignore{ In \cite{21n}, the optimization
 becomes
  $  \arg \max_{\substack{ \\ \mathtt f^{c,t}_{e_n} \in \mathcal{P}(\mathcal{F}) R_{e_n} \in \mathcal{R}^t}} \sum_{e_n \in \mathcal K} \sum_{u_l \in \mathtt u(\mathtt f^{c,t}_{e_n})} \frac{R_{\min}}{B},  $ where $R_{\min}$ is the minimum transmission rate among all eRRHs.}
 \end{theorem}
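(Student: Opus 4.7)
The plan is to establish a bijection between feasible solutions of the subproblem P2 and cliques of maximum cardinality $|\mathcal K|$ in the IA-IDNC graph, and then verify that the objective of P2 coincides (up to monotone transformation) with the sum of vertex weights. This reduces P2 to the maximum weight clique problem on $\mathcal G_\text{IA-IDNC}$.

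First, I would argue feasibility by construction. Each vertex $V_i$ represents a schedule $\mathbf S_i \in \mathcal S_{e_n}$ consisting of an IDNC combination $\mathtt f^c_{e_n}$, its targeted user set $\mathtt u(\mathtt f^c_{e_n})$, and an adopted rate $R^c_{e_n}$. Because the associations are built from $\mathcal A_{e_n}=\mathcal N_w\times\mathcal C_{e_n}$, we automatically have $\mathtt f^c_{e_n}\subseteq\mathcal P(\mathcal C_{e_n})$, which is the cache constraint of P2. Conditions IDNC-C1 and IDNC-C2 guarantee that the combined file is instantly decodable at every targeted user, so the set $\mathtt u(\mathtt f^c_{e_n})$ is consistent with the definition used in P2. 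Finally, the rate condition RC enforces $R^c_{e_n}\ge R_\text{th}$ and a common rate across all associations in $\mathbf S_i$, which is upper-bounded by each user's channel capacity. Hence every vertex encodes a single-eRRH schedule that is feasible in isolation.

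Next, I would show that cliques in $\mathcal G_\text{IA-IDNC}$ correspond exactly to collections of such per-eRRH schedules that are jointly feasible in P2. The transmission condition TC makes two vertices adjacent only if their targeted user sets are disjoint; since two distinct vertices belonging to the same $\mathcal S_{e_n}$ that both target the same user cannot coexist, a maximal clique contains at most one vertex per eRRH and therefore has size at most $|\mathcal K|$. A clique of degree $|\mathcal K|$ thus assigns a distinct IDNC combination and rate to each eRRH with pairwise disjoint targeted user sets, which is precisely constraint C1 of P2. Conversely, any P2-feasible solution selects exactly one schedule per eRRH whose user sets are pairwise disjoint, producing a clique of size $|\mathcal K|$. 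This establishes the bijection.

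Finally, I would match the objectives. The per-vertex weight in \eref{eq10} evaluates to $w(V_i) = |\mathtt u(\mathtt f^c_{e_n})|\,R^c_{e_n}/B$, i.e., the number of scheduled users times the adopted rate normalized by the file size. Summing over a clique $\mathtt C$ thus measures the total instantly decodable information rate delivered in one transmission slot. Since $T^c_{e_n}=B/R^c_{e_n}$, maximizing $w(V_i)$ for each $e_n$ is monotone in shrinking the transmission duration, and pooling more users into each vertex reduces the accumulated delay term $\mathbb D_{u_l,t}$ that appears in the anticipated completion time of \corref{cor:LBcompletion}. Hence $\arg\max_{\mathtt C\in\mathcal C}\sum_{V_i\in\mathtt C}w(V_i)$ coincides with the minimizer of P2.

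The main obstacle is the coupling induced by the power vector $\mathbf P$: the SINR, and therefore each vertex weight, depends on the powers transmitted by all eRRHs, not only the one the vertex represents. I would handle this as the paper suggests just before \eref{eq10}, by treating $\mathbf P$ as an outer variable to be optimized in a subsequent step; conditioning on $\mathbf P$ decouples the vertex weights and turns the problem into a pure graph search, after which the power allocation sub-loop closes the joint solution developed in \sref{JA}.
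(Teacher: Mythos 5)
Your proposal follows essentially the same route as the paper's own (sketched) proof: a one-to-one mapping between per-eRRH schedules and vertices, the observation that cliques under the TC condition are exactly the jointly feasible multi-eRRH transmissions satisfying C1 (with at most one vertex per eRRH, hence clique size at most $|\mathcal K|$), and the identification of the clique weight sum with the objective via $w(V_i)=|\mathtt u(\mathtt f^{c}_{e_n})|R^{c}_{e_n}/B$. Your added remarks on verifying the cache/rate constraints per vertex and on deferring the power vector $\mathbf P$ to an outer optimization are consistent with what the paper does in \sref{JS}, so the argument is correct and matches the paper's approach.
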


 \ignore{\begin{lemma} \label{lem:Fdecisions}
 All feasible schedules in $\mathcal S^t$ of transmitting eRRHs, their transmitted IDNC files to sets of scheduled users, and their transmission rates/powers in the $t$-th time slot are defined by the set of all maximal cliques in the $\mathcal G^t_\text{IA-IDNC}$ graph.
 \end{lemma}}
 
\begin{proof}
The proof of \thref{thm:Mweight} is omitted due to space limitation and only we provide a sketch of it as follows. First, we need to show that there is a unique  one-to-one mapping between each schedule $\mathbf S_{i}\in \mathcal S$ and each vertex $V_i\in \mathcal V$ in $\mathcal G_\text{IA-IDNC}$ ($i=1,2,\cdots, |\mathcal S|$). Then, we  emphasize that
each maximal clique in the IA-IDNC graph that consists of a set of vertices  satisfying all edge conditions represents feasible coded transmissions from the eRRHs. Finally, the proof can be concluded by
 showing that the contributed weight of the maximum weight clique $\mathtt C$ for minimizing the transmission duration is: %\begin{align}\nonumber
  $ w (\mathtt C)=\sum_{V_i \in \mathtt C} w (V_i)= \sum_{e_n \in \mathcal K} \sum_{\mathbf{S}_{i} \in \mathtt S} w (\mathbf{S}_{i})=\sum_{e_n \in \mathcal K}  \sum_{s\in  \mathbf{S}_{i}}\frac{\min_{u_{l,s}\in \mathtt u(\mathtt f^{c}_{e_n})}\log_{2}(1+\text{SINR}_{e_{n,s},u_{l,s}}(\textbf{P}))}{B},~ \forall i=1,2,\cdots, |\mathtt C|$,
  %\end{align}
 where $\mathtt S$ is the set of the selected potential schedules. Therefore, the subproblem P2 is equivalent to the maximum weight  clique problem among the maximal cliques in the IA-IDNC
 graph. 
 \end{proof}
  
The problem in \thref{thm:Mweight} is clearly NP-hard problem, and solving it optimally is intractable \cite{NP}. However, we heuristically  and efficiently solve it in the next subsection.

\subsection{Solution of Subproblem P2}\label{JS}
In this section, we solve the problem in \thref{thm:Mweight} by characterizing  the joint solution to the NC user scheduling and power allocation problem while designing the IA-IDNC graph. A proper power allocation for each eRRH leads to  suppress the interference in the system, thus a better transmission rate is achieved. As a result, the transmission duration for delivering files to the scheduled users is minimized.

Consider a power-clique (PC) in $\mathcal G_\text{IA-IDNC}$ that is associated with a network-coded user scheduling $\mathtt S=\{ \mathbf{S}_{1}, \mathbf{S}_{2},\cdots, \mathbf{S}_{|K|}\}$, where $\mathbf{S}_{1}$ is the schedule of $e_1$-th eRRH, which consists of a set of associations $s_1,s_2,\cdots, s_{|\mathbf S_1|}$, and $|\mathbf S_{1}|=|\mathtt u(\mathtt f^{c}_{e_n})|$. Our goal
is to obtain a local optimal eRRH power allocation 
vector, denoted as  ($P^{*}_{e_1},\cdots,P^{*}_{e_K}$) for that PC. The power allocation problem
is formulated as an optimization problem of maximizing the weighed sum-rate. As such, all scheduled users receive
files sent by their associated eRRHs with minimum transmission duration, which can be expressed as 
\begin{align}
\label{eq13} \nonumber
&\text{P4}: \max_{P_{e_n}}\sum _{n=1}^{K} \frac{|\mathtt u(\mathtt f^{c}_{e_n}(\mathbf{S}_n))|}{B} * \min_{u_l \in \mathtt u(\mathtt f^{c}_{e_n}(\textbf{S}_{n}))} \left(\log_{2}(1+\text{SINR}_{e_n,u_l}(\textbf{P}))\right),\\ 
&~{\rm s.~t.\ } 0\leqslant P_{e_n}\leqslant P_{\max},\forall n=1,2,\cdots, K,
\end{align}
where $\mathtt u(\mathtt f^{c}_{e_n}(\mathbf{S}_{n}))$ is the set of scheduled users in $n$-th schedule corresponding to $e_n$-th eRRH. The power allocation problem in P4 is a non-convex optimization problem. Therefore, similar to the works in literature (see for example, \cite{33, 34} and references therein), we focus on finding the local optimal solution.

The proposed solution to the problem in \thref{thm:Mweight} is executed at the CBS at each transmission slot and divided into two stages: designing the IA-IDNC graph and finding its corresponding maximum PC.

\textit{First stage:} The  IA-IDNC  graph is designed as follows.  Using \textbf{IDNC-C1}, \textbf{IDNC-C2}, and \textbf{RC} conditions that explained in \sref{IA},
we generate all schedules $\mathcal S$ and represent them by vertices in $\mathcal V$. Afterwards, we check the connection between any two pairs $V_i$ and $V_{i'}$ of vertices in  $\mathcal G_\text{IA-IDNC}$ based on the transmission condition \textbf{TC} in \sref{IA}. Any connected vertices result in
a feasible network-coded scheduling to the eRRHs. Then, we evaluate the power allocation of
such network-coded scheduling $\{ \mathbf{S}_{1}, \mathbf{S}_{2},\cdots, \mathbf{S}_{K}\}$ by solving the optimization problem \eref{eq13}. By the computed power allocation and corresponding rate, we compute the weights of the corresponding vertices in $\mathcal G_\text{IA-IDNC}$
as expressed in \eref{eq10}. We repeat the above  process to all network-coded schedules in the IA-IDNC graph. 

\textit{Second stage:} The second stage finds the maximum weight PC
$\mathtt C$ among all other maximal PCs  in $\mathcal G_\text{IA-IDNC}$
graph. In the first step, we select vertex $V_i\in \mathcal V, (i=1,2,\cdots, | \mathcal V|)$  that has the maximum weight $w (V^*_i)$  and add it to $\mathtt C$ (at this point $\mathtt C = \{V^*_i\}$). Then,  the  subgraph $\mathcal G_{\text{IA-IDNC}}(\mathtt C)$, which consists of vertices in graph  $\mathcal G_\text{IA-IDNC}$ that are adjacent to vertex $V^*_i$ is extracted and considered for the next vertex selection process. In the next step, a  new maximum weight  vertex $V^*_{i'}$ (i.e., $V^*_{i'}$ should be in the corresponding PC of $V^*_i$) is selected from subgraph $\mathcal G_{\text{IA-IDNC}}(\mathtt C)$. Now, $\mathtt C = \{V^*_i, V^*_{i'}\}$. We repeat this process until no further vertex is adjacent  to all vertices in the maximal weight PC  $\mathtt C$. The selected $\mathtt C$ contains at most $K$ vertices.

\ignore{\begin{figure}[t]
\centering
\includegraphics[width=0.6\linewidth]{./figs/fig4_nnnN.pdf}
\caption{The IA-IDNC graph corresponding to  the system in Example $1$ and Fig. \ref{fig3} based on \sref{IA}. Every vertex represents a possible NC combination that consists of combined feasible associations in each eRRH. For plotting simplification, we did not include the uncoded (without NC) vertices that represents one association per vertex in the above graph.}
\label{fig4n}
\end{figure} }
\subsection {New D2D Conflict Graph }\label{D2D}
We introduce a new D2D conflict graph, denoted by $\mathcal{G}_\text{d2d}(\mathcal V, \mathcal E)$, that considers all possible conflicts for scheduling users on D2D links, such as transmission, network coding, half-duplex conflicts. This leads to feasible transmissions from  the potential D2D transmitters $|\mathcal N_{\text{tra}}|$, where each $u_k\in \mathcal N_{\text{tra}}$ transmits the IDNC combination $\mathtt f^{d2d}_{u_k}$ to the scheduled users  $\mathtt u(\mathtt f^{d2d}_{u_k})$ with the transmission rate $r^{d2d}_{u_k}$. 

Let $\mathcal N_{\text{d2d}}$ denote the set of unscheduled users to the eRRHs, i.e., $\mathcal N_{\text{d2d}}=\mathcal N\backslash\bigcup^{K}_{{n=1}} \mathtt u(\mathtt f^{c}_{e_n})$, and let $\mathcal N_{\text{d2d},w}\subset \mathcal N_{\text{d2d}}$ denote the set of users that still wants some files. Hence, the D2D conflict graph is designed by generating all vertices for $u_k$-th possible D2D transmitter, $\forall u_k \in \mathcal N_{\text{d2d}}$. The vertex set $\mathcal V$ of the entire graph is the union of vertices of all users.
Consider, for now, generating the vertices  of $u_k$-th user. Note that  $u_k$-th D2D transmitter can encode its IDNC file $\mathtt f^{d2d}_{u_k}$ using  its previously received files $\mathcal H_{u_k}$. Therefore, each
vertex is generated for each single file $f_h\in \mathcal W_{u_i}\cap \mathcal H_{u_k}$ that is requested by each user $u_i\in \mathcal N_{\text{d2d},w}\cap \mathcal Z_{u_k}$ and for each achievable rate $r$ of $u_k$-th user that is defined below.
\begin{definition}
The set of achievable rates $\mathcal R^{d2d}_{u_k,u_i}$ from $u_k$-th user to $u_i$-th user is  a subset of achievable rates $\mathcal R^{d2d}_{u_k}$ that are less than or equal to channel  capacity  $r^{d2d}_{u_k,u_i}$. It can be expressed by  $\mathcal R^{d2d}_{u_k,u_i} = \{r \in \mathcal R^{d2d}_{u_k}| r \leq r^{d2d}_{u_k,u_i}~ \text{and} ~u_i\in \mathcal N_\text{d2d,w}\cap\mathcal Z_{u_k}~ \text{and}~ r\geq R^c_{e_{n^\ast}} \}$.
\end{definition}
The above definition emphasizes that $u_i$-th user in the coverage zone $ \mathcal Z_{u_k}$ can  receive a file from $u_k$-th D2D transmitter if the adopted  transmission rate $r$ is in the achievable set $R^{d2d}_{u_k,u_i}$ and  no less than the minimum transmission rate of $e_{n^\ast}$-th eRRH. Therefore, we generate $|\mathcal R^{d2d}_{u_k,u_i}|$ vertices for a requesting  file  $f_h \in \mathcal H_{u_k}  \cap \mathcal W_{u_i}, \forall u_i \in  \mathcal N_\text{d2d,w}\cap \mathcal Z_{u_k}$. In summery,  a vertex $v_{r,i,f}^k$ is generated  for each association of a transmitting user $u_k$, a  rate  $r \in \mathcal R^{d2d}_{u_k,u_i}$, and  a requesting file  $f_h \in \mathcal H_{u_k}  \cap \mathcal W_{u_i}$ of user $u_i \in  \mathcal N_\text{d2d,w}\cap \mathcal Z_{u_k}$. Similarly, we generate all vertices  for all users in $\mathcal N_{\text{d2d}}$.\footnote{For the space limitation, we generate only the vertcies of $\mathcal N_{\text{d2d}}$ users and ignore those representing scheduled users to $e_{n^\prime}$-th eRRH, $ e_{n^\prime} \in \mathcal K$.  However, they can be generated and connected to each other using similar steps in this section with the difference that any D2D transmitter should be able to deliver files within the idle time slot of the corresponding eRRH.}

All possible conflict connections  between vertices (conflict edges between circles) in the D2D conflict graph are provided as follows. Two vertices $v_{r,i,h}^k$ and $v_{r',i',h'}^{k'}$ are adjacent   by a conflict edge in $\mathcal G_\text{d2d}$,  if one of the  following conflict conditions (CC) is true:
\begin{itemize}
\item  \textbf{IDNC (CC1):} ($u_k=u_{k^\prime}$) and ($f_h\neq f_{h^\prime}$) and ($f_{h},f_{h^\prime}$) $\notin \mathcal H_{u_{k^\prime}}\times \mathcal H_{u_{k}}$. A conflict edge between vertices is connected as long as the files they represent are not-instantly decodable to a set of scheduled users to the same $u_k$-th D2D transmitter.
 \item  \textbf{Rate (CC2):}  ($u_k = u_{k^\prime}$) and ($r\neq r'$). All adjacent vertices  correspond to  the same $u_k$-th D2D transmitter should have the same achievable rate.
 \item  \textbf{Transmission (CC3):}  ($u_k \neq u_{k^\prime}$) and ($u_i=u_{i^\prime}$). The same user cannot be scheduled to two different D2D transmitters $u_k$ and $u_{k^\prime}$. 
 \item \textbf{Half-Duplex (CC4):} ($u_k = u_{i^\prime}$) or ($u_{k^\prime}=u_i$). The same user cannot transmit and receive in the same  transmission slot.
 \end{itemize}
 
\ignore{Next, connecting two vertcies $v_{r,k,l}^i$  and $v_{R,k',l'}^b$  if and only if
\begin{itemize}
 \item  \textbf{G1}:  $k = k'$ and $l \neq l'$ we have $r \geq R_{b'\text{wait}}$. This guarantees that the same device $k$ can be targeted from device $i$ with a new file as long as its transmission rate can deliver the packet in the remaining time slot of that eRRH.
\ignore{\item  \textbf{G2}: $k \neq k'$. This condition completes the adjacencies in the graph for any two vertices not opposing the CC1 constraint for any two different users.}
\end{itemize}}
Given the aforementioned designed D2D conflict graph, the following theorem reformulates the subproblem P3.

\begin{theorem} \label{thm:Mweight1}
The subproblem of maximizing the number of scheduled users on D2D links in $P_3$ at the $t$-th transmission  is equivalently represented by the maximum weight
independent set (IS) selection  among all the maximal sets
in the $\mathcal G_\text{d2d}$ graph, where the weight $\psi(v_{r,i,h}^k)$ of each vertex $v_{r,i,h}^k$ is given by
\begin{align} \label{eq12}
 \psi(v_{r,i,f}^k) =|\mathcal Z_{u_k}\cap \mathcal N_{d2d,w}(\mathcal H_{u_k})|(\frac{r}{B}).
\end{align}
\end{theorem}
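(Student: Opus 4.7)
My plan is to establish a bijection between feasible D2D schedules satisfying the constraints of subproblem P3 and independent sets in $\mathcal{G}_\text{d2d}$, then verify that the proposed vertex weights correctly encode the objective of maximizing $Z_t$. The proof will mirror the style used in Theorem 2: first a structural (constraints $\leftrightarrow$ edges) argument, then a combinatorial (objective $\leftrightarrow$ weight sum) argument.

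First I would define the mapping explicitly: any vertex $v_{r,i,h}^k \in \mathcal{V}$ encodes the atomic scheduling decision ``$u_k$ transmits file $f_h$ to $u_i$ at rate $r$ over a D2D link.'' A collection of such atomic decisions constitutes a candidate D2D schedule, and I would show that it is feasible for P3 if and only if the corresponding vertex set is pairwise non-adjacent in $\mathcal{G}_\text{d2d}$. Concretely, I would verify each conflict condition against each P3 constraint: conditions CC1 and CC2 together enforce that every transmitter $u_k$ sends a single XOR combination $\mathtt{f}^{d2d}_{u_k}$ that is instantly decodable to all its targeted users at one adopted rate $r^{d2d}_{u_k}$, which is exactly the membership requirement $\mathtt{f}^{d2d}_{u_k,t}\subseteq \mathcal{P}(\mathcal{H}_{u_k,t})$ together with the rate-matching requirement $r^{d2d}_{u_k}\leq r^{d2d}_{u_k,u_i}$ for all targeted $u_i$; condition CC3 enforces the disjointness portion of C2 (no user is targeted by two D2D transmitters, and, combined with the fact that vertices are generated only for $u_k\in \mathcal{N}_\text{d2d}$, the disjointness with eRRH-scheduled sets); condition CC4 enforces the half-duplex limitation. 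The rate-threshold conditions C6--C7 and the idle-time/QoS constraints $r^{d2d}_{u_k}(T^c_{e_{n^\ast},t}-T^c_{e_{n^\prime},t})\geq B$ and $T^{d2d}_{u_k}\leq T^c_{e_{n^\ast},t}$ are already built into the vertex-generation step (via the definition of $\mathcal{R}^{d2d}_{u_k,u_i}$, which restricts rates to values $\geq R^c_{e_{n^\ast}}$ and to receivers in $\mathcal{N}_\text{d2d,w}\cap \mathcal{Z}_{u_k}$), so every vertex is intrinsically feasible. Hence maximal independent sets in $\mathcal{G}_\text{d2d}$ are exactly the maximal feasible D2D schedules.

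For the weight-to-objective correspondence, I would argue that the weight $\psi(v_{r,i,h}^k)=|\mathcal{Z}_{u_k}\cap \mathcal{N}_{d2d,w}(\mathcal{H}_{u_k})|\cdot (r/B)$ is the natural rate-aware generalization of counting scheduled users: the first factor captures the potential receiver pool of $u_k$ (i.e., an upper bound on the contribution that picking $u_k$ as a transmitter can make to $|\mathtt{u}(\mathtt{f}^{d2d}(\mathcal{N}_{\text{tra},t}))|+|\mathcal{N}_{\text{tra},t}|$), while the second factor $r/B$ biases the IS selection toward transmissions that fit within $T^c_{e_{n^\ast},t}$ and therefore are more likely to actually be realizable simultaneously. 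Summing over an independent set $\mathtt{I}$ then yields a quantity that dominates $Z_t(\mathtt{I})$ up to the rate normalization, and maximizing $\sum_{v\in \mathtt{I}}\psi(v)$ therefore aligns with maximizing $Z_t$ as claimed in P3. I would close the proof by invoking the one-to-one correspondence established above to conclude that $\arg\max_{\mathtt{I}\in \mathcal{I}}\sum_{v\in \mathtt{I}}\psi(v)$ solves P3, where $\mathcal{I}$ is the family of maximal independent sets.

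The main obstacle I anticipate is the weight-to-objective step, since the P3 objective $Z_t=|\mathtt{u}(\mathtt{f}^{d2d}(\mathcal{N}_{\text{tra},t}))|+|\mathcal{N}_{\text{tra},t}|$ is a pure cardinality, while $\psi$ bundles the coverage count with a rate-dependent factor $r/B$. A fully rigorous equivalence would require showing that at optimality the $r/B$ factor does not distort the cardinality ordering, or alternatively framing the equivalence as a rate-weighted relaxation that preserves the optimal support. Following the authors' convention in Theorem 2, I would handle this by justifying $\psi$ as the heuristic priority that simultaneously respects the time-window constraints and the P3 cardinality, and by citing the analogous argument used for the IA-IDNC weight in \eqref{eq10}; a tight equivalence argument, if space permitted, would proceed by a reduction showing that any IS improving the cardinality objective can be transformed, without decreasing $\sum \psi$, into an IS realizing the same improvement, and vice versa.
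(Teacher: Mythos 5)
The paper does not actually prove this theorem: unlike Theorems~1 and~2, which come with proof sketches, Theorem~3 is stated and immediately followed by an informal discussion of the weight's ``two potential benefits,'' so your reconstruction is the only argument on the table. Your structural half is correct and is exactly what the D2D conflict graph construction is built to deliver: vertices are generated only for feasible atoms (receiver in $\mathcal Z_{u_k}\cap\mathcal N_{\text{d2d},w}$, file in $\mathcal H_{u_k}\cap\mathcal W_{u_i}$, rate in $\mathcal R^{d2d}_{u_k,u_i}$ with $r\geq R^c_{e_{n^\ast}}$, which absorbs C7 and the time-window constraints of P3), and CC1--CC4 encode, respectively, instant decodability of each transmitter's XOR combination, a single adopted rate per transmitter, the disjointness part of C2, and the half-duplex restriction. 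Hence maximal independent sets are in one-to-one correspondence with maximal feasible D2D transmission configurations, which mirrors the mapping-plus-feasibility style of the sketch the authors give for Theorem~2.

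The obstacle you flag in the weight step is real and cannot be closed in the form the theorem suggests: for an independent set $\mathtt I$, the quantity $\sum_{v_{r,i,h}^k\in\mathtt I}|\mathcal Z_{u_k}\cap\mathcal N_{\text{d2d},w}(\mathcal H_{u_k})|\left(\tfrac{r}{B}\right)$ is neither $|\mathtt u(\mathtt f^{d2d}(\mathcal N_{\text{tra},t}))|+|\mathcal N_{\text{tra},t}|$ nor a monotone transformation of it: the coverage factor is a property of the transmitter rather than of the realized schedule, and the $r/B$ factor can reorder independent sets of different cardinalities, so a maximum-$\psi$-weight IS need not be a maximum-cardinality IS. The exchange argument you sketch (converting any cardinality-improving IS into one that does not decrease $\sum\psi$) would not go through without extra assumptions on the rates. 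The defensible version of the statement is the one your bijection already establishes --- P3's feasible set equals the set of independent sets of $\mathcal G_{\text{d2d}}$, with $\psi$ acting as a rate-aware surrogate for the cardinality objective --- and your write-up, by explicitly separating the structural equivalence from the heuristic weight design, is more careful on this point than the paper itself.
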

The above weight metric shows two potential benefits: i) $|\mathcal Z_{u_k}\cap \mathcal N_{d2d,w}(\mathcal H_{u_k})|$ represents   that the $u_k$-th transmitting user is connected to many other users that are requesting files in $\mathcal H_{u_k}$ and ii) $\left(\frac{r}{B}\right)$  provides a balance between  the transmission rate and the number of scheduled  users on D2D links.

\subsection{Details of Greedy Solution}\label{DGS}
The proposed solution here greedily maximizes the number of scheduled users on D2D links within $T^{c}_{e_{n^\ast}}$ by maximizing the number of vertices in any IS in the D2D conflict graph. In order to
maximize the number of vertices in any IS, we update the weight of each vertex. An appropriate design of the updated weights of vertices leads to selection of a large number of vertices and each vertex has high original weight that defined in \eref{eq12}. 

Let $\pi_{v,v'}$\footnote{For notation simplicity, we replace $v_{r,k,l}^i$ by $v$ and $v_{r',k',l'}^{i'}$ by $v'$.
} define the non-adjacency indicator of vertices $v$ and $v'$ in  the $\mathcal{G}_\text{d2d}$ graph such that:
 \begin{equation}
 \pi_{v, v'} =
   \begin{cases}
    1 & \text{if $v$ is not adjacent to $v'$ in $\mathcal{G}_\text{d2d}$}, \\
    0 & \text{otherwise}.
   \end{cases}
 \end{equation}
Now, let the weighted degree $n_{v}$ of vertex $v$   is defined by $n_{v} = \sum_{v' \in \mathcal{G}_\text{d2d}} \pi_{v, v'}. \psi(v')$, where $\psi(v')$ is the original weight  of vertex $v'$  defined in \pref{thm:Mweight1}.   Hence, the modified weight  of vertex $v$ is defined as
\begin{align}\label{eq15}
w (v) & = \psi(v) n_v  = \psi(v)\sum_{v'  \in \mathcal{G}_\text{d2d}} \pi_{v, v'}. \psi(v').
\end{align}

The modified weight of a vertex $v$ in \eref{eq15} points two
attractive features: (i) it has a large original
weight, and (ii) it is not adjacent to a large
number of vertices that have high original weights. Based on this, we iteratively and heuristically execute a greedy vertex search scheme as follows. Initially, a vertex $v^*$ that has the maximum weight $w(v^*)$ is selected  and added to the maximal IS $\Gamma$ (i.e., $\Gamma = \{v^*\}$). Then,  the  subgraph $\mathcal G_\text{d2d}(\Gamma)$, which consists of vertices in graph  $\mathcal G_\text{d2d}$ that are not adjacent to vertex $v^*$, is extracted and considered for the next  process. In the next step, a  new maximum weight  vertex $v'^*$ is selected from subgraph $\mathcal G_\text{d2d}(\Gamma)$ (at this point $\Gamma = \{v^*, v'^*\}$). We repeat this process for all transmitting users so that no further vertex is not adjacent  to all the vertices in $\Gamma$. The  selected D2D transmitters in the maximum IS  $\Gamma$  generate coded  files and broadcast  them to all neighboring users on D2D links.

The overall two-solution joint approach that are explained in \sref{JS} and \sref{DGS}, respectively, is provided in \algref{alg:LGS}.

\begin{algorithm}[t!]
\begin{algorithmic}[1]
\STATE Require $\mathcal N$, $\mathcal F$, $\mathcal K$,  $\mathcal C_{e_n}$, $\mathcal H_{u_l,0}$, $\mathcal W_{u_l,0}$, $P_{\max}$, $Q_{u_k}$, $B$, $h^c_{e_n,u_l}, h^{d2d}_{u_k,u_l}$, $\forall u_k,u_l \in \mathcal N$, $\forall e_n\in \mathcal K$.
\STATE Initialize $\mathtt C = \varnothing$ and $\Gamma = \varnothing$. 
\STATE \textbf{Solution of Subproblem P2: eRRH-user Transmission}\;
\STATE Design $\mathcal G_{\text{IA-IDNC}}$ according to \sref{IA}.\;
\FOR{\text{each}  $\mathtt S$}
\STATE Calculate
 $\textbf{P}=\{P^*_{e_1},P^*_{e_2},\ \cdots, \, P^*_{e_K}\}$ by solving \eref{eq13}. \;
\STATE Obtain $V_i= \{(R^*_{e_i},P^*_{e_i},s_{e_i}),\cdots, \, (R^*_{e_i},P^*_{e_i},s_{|\mathtt u(\mathtt f^c_{e_i})|}) \}$ ($i=1,\cdots, K$) according to $\mathbf P$.\; \STATE Calculate $w(V_i)$ using \eref{eq10}.\;
\ENDFOR
\STATE $\mathcal G_{\text{IA-IDNC}}(\mathtt C ) \leftarrow \mathcal G_{\text{IA-IDNC}}$.\;
\WHILE{$\mathcal G_{\text{IA-IDNC}}(\mathtt C ) \neq \varnothing$}
\STATE $V^*_i=\arg\max_{V_i\in \mathcal G(\mathtt C )} \{w (V_i)\}$.\; 
\STATE Set $\mathtt C \leftarrow \mathtt C \cup V^*_i$ and $\mathcal G_{\text{IA-IDNC}}(\mathtt C ) \leftarrow \mathcal G_{\text{IA-IDNC}}(V^*_i)$.\;
\ENDWHILE
\STATE \textbf{Solution of Subproblem P3: D2D Transmission}
\STATE Design $\mathcal G_\text{d2d}$ according to \sref{D2D}.\;
\STATE $\mathcal G_{\text{d2d}}(\Gamma) \leftarrow \mathcal G_{\text{d2d}}$.\;
\WHILE{$\mathcal G_{\text{d2d}}(\Gamma) \neq \varnothing$}
\STATE $\forall v \in \mathcal G_{\text{d2d}}(\Gamma)$: calculate $\psi(v)$ and $w(v)$ using \eref{eq12} and \eref{eq15}, respectively.\;
\STATE  $v^*=\arg\max_{v\in \mathcal G_{\text{d2d}}(\Gamma)} \{w (v)\}$.\;  
\STATE Set $\Gamma \leftarrow \Gamma \cup v^*$ and obtain $\mathcal G_{\text{d2d}}(\Gamma)$.\;
\ENDWHILE
\STATE Obtain $\mathtt C$ and $\Gamma $.
\end{algorithmic}
\caption{Proposed Joint Approach} \label{alg:LGS}
\end{algorithm}

\textbf{Example:}  We illustrate in this example how to design the IA-IDNC and D2D conflict graphs of the network presented
in Fig. \ref{fig3}. 
\begin{itemize}
\item In Fig. \ref{fig4n}, we plot the IA-IDNC graph, where each vertex represents a possible NC combination that consists of combined associations in each eRRH. For plotting simplification, we did not include the  vertices  that represents one association (no NC). The connections between vertices (circles) is based on the \textbf{TC} condition that explained in \sref{IA}. There are many possible maximal PCs in $\mathcal G_\text{IA-IDNC}$ that are represented by connected vertices. Each  one represents the potential network-coded scheduling of the eRRHs that minimizes the completion time of users. For example, one possible maximal PC shown in red color in Fig. \ref{fig4n} is $\{s_{R^*P^*53}^1, s_{R^*P^*61}^1,s_{R^*P^*22}^2,s_{R^*P^*34}^2\}$. The five indices $e_1, R^*, P^*, 5, 3$ in the first association represent first eRRH, its transmission rate and power level, scheduled user and its requested file, respectively.

\item To ease the understanding of the D2D conflict graph, we plot it only for the first three users $\{u_1, u_2, u_3\}$ of the network presented
in Fig. \ref{fig3} and irrespective to the possible scheduled users to the eRRHs. The D2D conflict graph is shown in Fig. \ref{fig4}, where the conflict conditions \textbf{CC1} and \textbf{CC2} are represented by solid lines and conditions \textbf{CC3} and \textbf{CC4} are represented by dash lines. By \thref{thm:Mweight1}, one possible maximal IS in this graph is \{$v_{5,4,4}^1, v_{5,5,4}^1, v_{1.5,2,2}^3\}$. The first vertex $v_{5,4,4}^1$  represents the transmitting user $u_1$, its rate $r=5$, the scheduled user $u_4$ and its requested file $f_4$, respectively.
\end{itemize}

 \begin{figure}[t]
      \centering
      \begin{minipage}{0.49\textwidth}
          \centering
         \includegraphics[width=0.85\textwidth]{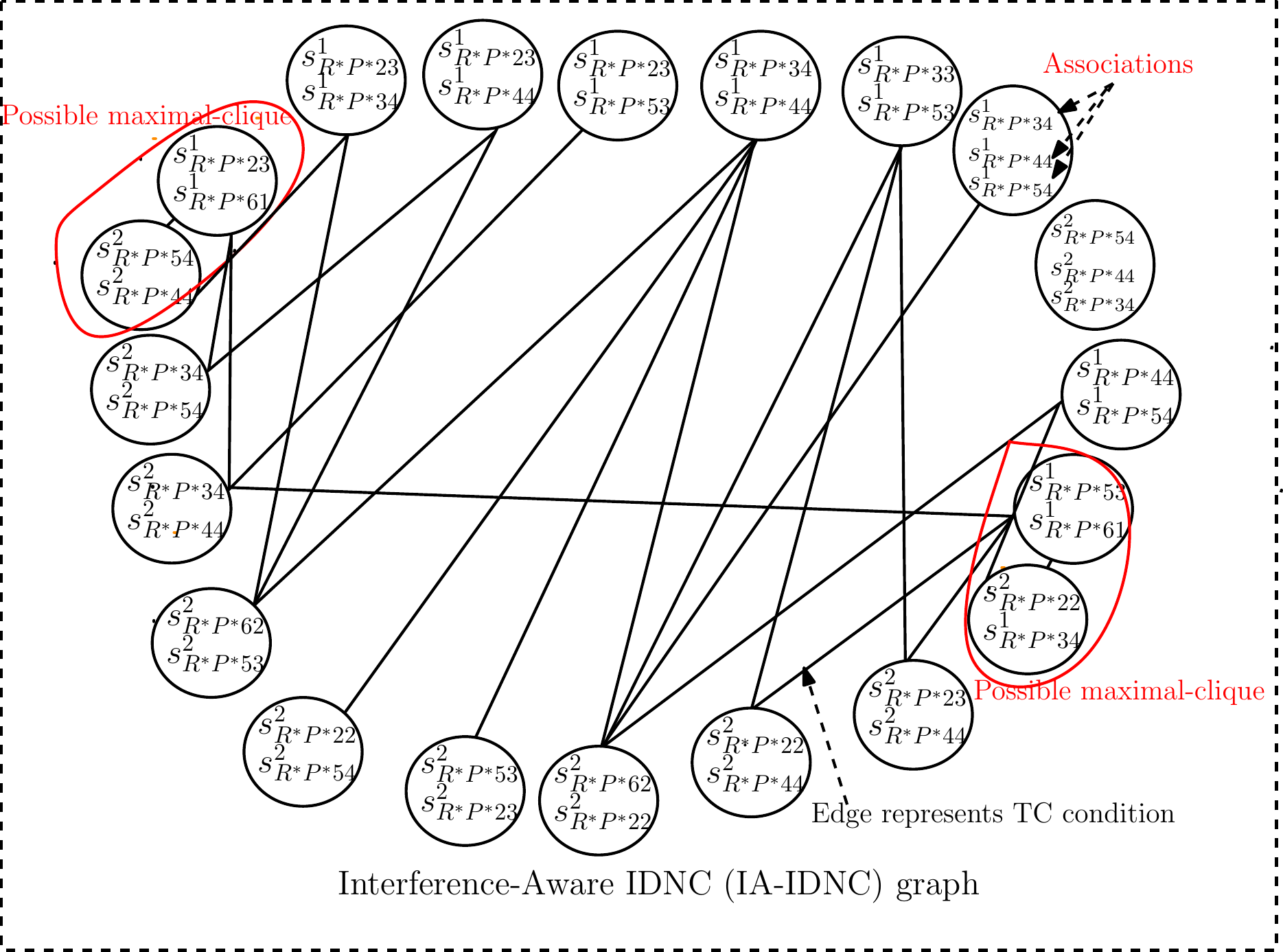} % first figure itself
          \caption{The IA-IDNC graph. }
          \label{fig4n}
      \end{minipage}\hfill
      \begin{minipage}{0.49\textwidth}
        %  \centering
         \includegraphics[width=0.85\textwidth]{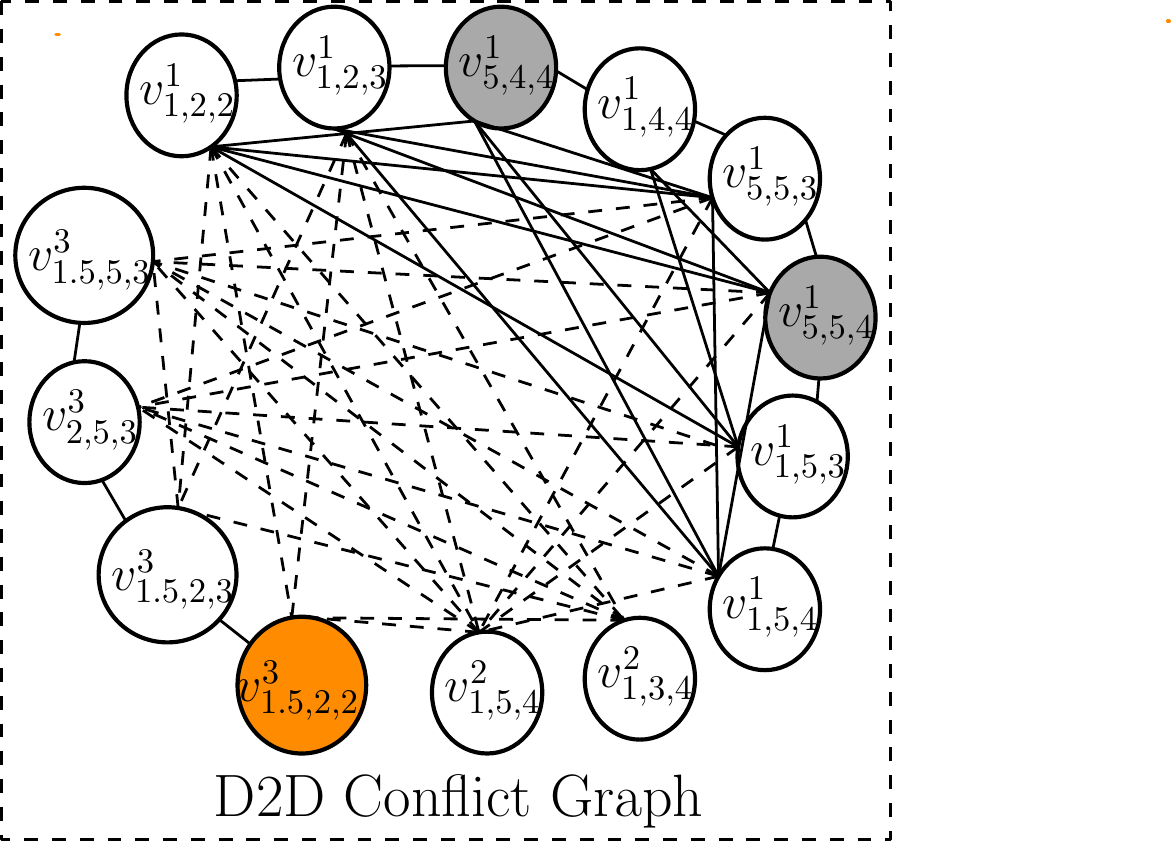} % second figure itself
          \caption{The D2D conflict graph.}
          \label{fig4}
       \end{minipage}
  \end{figure}

%\subsection{Complexity of Joint Approach}

\section{Completion Time Minimization: Coordinated Scheduling Approach} \label{CS}
In this section, we propose  a faster and simpler coordinated scheduling approach. The need for this approach is invoked by the possibly large number of IDNC combinations generated by the joint approach in large network size. In such large networks, we
can utilize this alternative approach to obtain fast
and efficient solution.\ignore{ In particular,  a coordinated scheduling solution is developed to solve the completion time optimization problem without optimizing the power allocation of the eRRHs that considering all NC combinations in the network.}

Let $P_{e_n}$ be a fixed power level of $e_n$-th eRRH, $\forall e_n\in \mathcal K$. The completion time minimization problem at $t$-th transmission slot can be written as a coordinated scheduling problem as follows
\begin{subequations}
\begin{align} \label{eqn:Mschedule}
&\text{P5}: \max_{\substack{
\mathtt f^{c}_{e_n}\subseteq \mathcal P( \mathcal{C}_{e_n})\\ \mathtt u(\mathtt f^c_{e_n})\cap \mathtt u(\mathtt f^c_{e_{n'}}) =\varnothing\\ R_{e_n}\in \mathcal R}} Z_t\\     &{\rm subject ~to\ } 
  \begin{cases}  
   (\mathtt u(\mathtt f^{d2d}_{u_k}), r_{u_k})=\arg \min_{\substack{ \mathcal N_{\text{tra},t}\in \mathcal P(\mathcal{N})\\ \mathtt f^{d2d}_k\in \mathcal P( \mathcal{H}_{u_k})\\ r_{u_k}\in \mathcal{R}_{u_k}}} \left \{\max_{u_l \in \mathcal N_{w,t}} \mathtt T_{u_l,t}\right \},~ \forall u_k\in \mathcal N_{\text{tra},t};   \\
     \mathtt u(\mathtt f^{d2d}_{u_k}) \cap \mathtt u(\mathtt f^{d2d}_{u_{k'}}) =\varnothing, \forall u_k \neq {u_{k'}} \in \mathcal N_{\text{tra},t};\\
     \mathtt f^{d2d}_{u_k}\subseteq \mathcal P( \mathcal{H}_{u_k});
    r_{u_k}\geq R_\text{th}. \label{eq222b}
\end{cases}
       \end{align}
\end{subequations}
The objective function \eref{eq222b} of problem P5 represents the possible completion time minimization that we can obtain from D2D transmissions and \eref{eqn:Mschedule} represents maximizing the number of users left to be scheduled to eRRHs.\ignore{ Clearly, both \eref{eq222b} and \eref{eqn:Mschedule} are network-coded scheduling problems, which are NP hard problems. Therefore, 
this section proposes a low-complexity heuristic solution using a graph theoretic method. }

To solve the problem in P5, we develop a simple and fast approach that first schedules users that have low channel capacities from eRRHs on D2D links, and the remaining unscheduled users (if any) can be scheduled by eRRHs with high transmission rates.  This solution not only minimizes the completion time of users, but also offloads the eRRHs' radio resources. Indeed, after D2D transmissions,  few users left to be scheduled by eRRHs. This approach is summarized in \algref{Alg.2}, which consists of the following two stages at every transmission slot: i) users experience relatively weak 
channels from the eRRHs should be scheduled to the  potential D2D transmitters on D2D links, and
ii) the eRRHs deliver encoded files to a set of users that have not previously scheduled on D2D links.
The  coordinated scheduling approach is described below.

\textbf{{First stage:}} Here, we focus on scheduling a set of users that have low rates from the eRRHs to a set of potential transmitters via D2D links as such we solve problem \eref{eq222b} efficiently.\\
\textit{First step:} Inspired by  \sref{D2D}, we follow the same procedures that construct the new D2D conflict graph $\mathcal G_\text{d2d}(\mathcal V, \mathcal E)$. Thus, we generate a vertex  $v_{r,l,f}^k$ for each  transmitting user $u_k$, a transmission  rate  $r_{u_k} \in \mathcal R^{d2d}_{u_k,u_i}$ and  a missing file  $f_h \in \mathcal H_{u_k}  \cap \mathcal W_{u_l}$ of  a user $u_l \in  \mathcal N_{w}\cap \mathcal Z_{u_k}$. Further, the rate of each transmitting user in each generated vertex should be greater than or equal to $R_{\text{th}}$. Similarly, we generate the vertices  for  $N$ users and then connect them as in \sref{D2D}. \\
\textit{Second step:} We design two-layer weights for each generated vertex in the $\mathcal G_\text{d2d}$ graph, named by \textit{secondary} and \textit{primary} weights. The secondary weight of a vertex $v_{r,l,f}^k$ is defined as $w(v_{r,l,f}^k)=\frac{r_{u_k}}{B}$ that shows a partial contribution of that vertex in terms of reducing the completion time in the network. The primary weigh of a vertex $v_{u_l,f_h}$ is defined as $w(v_{u_i,f_h})=\frac{B}{\min_{e_n\in \mathcal K}R_{e_n,u_l}}, \forall f_h \in \mathcal{C}_{e_n}$. This primary weight  characterizes the users based on their channel capacities from the eRRHs to give them
priority to be scheduled on D2D links. A vertex with high primary weight (low
rate from eRRHs) leads to prolonged file delivery time from eRRHs. Then the corresponding users of such vertices should be scheduled on D2D links with the maximum rate from any possible potential D2D transmitters. As such, the completion time of these users is minimized. \\
\textit{Third step:} We propose to iteratively perform maximum weight search to form the set of D2D transmitters and their scheduled users in the maximal IS  $\Gamma$ as follows. First,
we search for the vertex with the maximum primary weight and find its corresponding  maximum secondary weight. If two or more vertices
have equal weights, we select one vertex randomly. We continue this process
until there are no other available vertices that can
be included in the selected IS. At the end, the final IS $ \Gamma$ consists
of vertices that represent a set of potential D2D transmitters. Each of these D2D transmitters serves users that have low channel capacities from the eRRHs.

\textbf{{Second stage:}} Here, we schedule users that are not scheduled on D2D links to eRRHs using RA-IDNC. In particular, we solve problem \eref{eqn:Mschedule} by maximizing the number of scheduled users to the eRRHs.  First, we construct the coordinated scheduling graph, denoted by $\mathcal G_\text{cord}(\mathcal E, \mathcal V)$, by generating a vertex $V_{e_n,u_i,f_h,R_{e_n}}$ for each $e_n\in \mathcal{K}$, for every file $f_h$ is requested by  user $u_i\in \mathcal{N}_{w,t}\backslash \Gamma$, and for each achievable rate for that user $R_{e_n}\geq r_\text{min}$, where $r_\text{min}$ is the minimum selected transmission rate of any transmitting user in $\Gamma$. The configuration of the set of edges in the scheduling graph is divided into coding (NC and rate edges) and transmission edge. Two vertices $V_{e_n,u_i,f_h,R_{e_n}}$ and $V_{e_n,u_{i^\prime},f_{h^\prime},R_{e_{n^\prime}}}$ representing the same eRRH are adjacent by a conflict edge if they do not satisfy the IDNC and rate conditions in \sref{IA}. Similarly, two vertices $V_{e_n,u_i,f_h,R_{e_n}}$ and $V_{e_{n'},u_{i^\prime},f_{h^\prime},R_{e_{n^\prime}}}$ representing  different eRRHs are adjacent by a conflict transmission  edge if the same user $u_i$ is scheduled to different eRRHs, i.e.,
$u_i=u_{i'}$ and $e_n \neq e_{n'}$. Then, a maximum search process is executed in $\mathcal G_\text{cord}$ to obtain the maximum IS $\mathtt I$ as presented in \algref{Alg.2}.

\begin{algorithm}[t!]
\caption{Proposed Coordinated Scheduling Approach}
\label{Alg.2}
\begin{algorithmic}[1]
\STATE Require $\mathcal N$, $\mathcal F$, $B$, $\mathcal K$,  $\mathcal C_{e_n}$, $\mathcal H_{u_l,0}$, $\mathcal W_{u_l,0}$, $P_{e_n}$,  $Q_{u_k}$, $h^{c}_{e_n,u_l}, h^{d2d}_{u_k,u_l}$, $\forall u_k,u_l \in \mathcal N$, $\forall e_n\in \mathcal K$.
\STATE \textbf{First stage}
\STATE Initialize $\Gamma  = \varnothing$ and  $\mathcal{G}_\text{d2d}(\Gamma) \leftarrow \mathcal{G}_\text{d2d}$.\;
\STATE $\forall v^k_{r,l,h} \in\mathcal{G}_\text{d2d}$: calculate $w(v^k_{r,l,h})=\frac{r_{u_k}}{B}$ and its corresponding  $w(v_{u_l,f_h})=\frac{B}{\min_{e_n\in \mathcal K}R_{e_n,u_l}}$.\;

\WHILE{$\mathcal G_{\text{d2d}}(\Gamma ) \neq c$}
\STATE Choose the maximum primary weight $v_{u_l,f_h}^*$ and finds its corresponding maximum secondary weight $v_{r,l,h}^{k^*}=\arg\max_{v_{r,l,h}^k\in \mathcal G_{\text{d2d}}(\Gamma)} \{w (v_{r,l,h}^{k})\}$.\;
\STATE Set $\Gamma \leftarrow \Gamma \cup v_{r,l,h}^{k^*}$ and $\mathcal{G}_\text{d2d}(\Gamma) \leftarrow \mathcal{G}_\text{d2d}(v_{r,l,h}^{k^*})$.\;
\ENDWHILE

%\STATE Output: $\Gamma$\;
\STATE \textbf{Second stage}
\STATE Design $\mathcal G_\text{cord}$ according to \sref{CS}.\;
\STATE $\forall V_{e_n,u_i,f_h,R_{e_n}}\in \mathcal G_\text{cord}$: calculate $w(V_{e_n,u_i,f_h,R_{e_n}})=\dfrac{R_{e_n}}{B}$.\;
\STATE Obtain the maximum IS $\mathtt I$ as follows.
\STATE Initialize $\mathtt I=\varnothing$.
\FOR{\text{each} $V_{e_n,u_i,f_h,R_{e_n}}\in \mathcal G_\text{cord}$ non-conflicting with $\mathtt I$} 
\STATE Select $V^*_{e_n,u_i,f_h,R_{e_n}}=\arg\max_{V_{e_n,u_i,f_h,R_{e_n}}\in \mathcal G_{\text{cord}}} \{w (V_{e_n,u_i,f_h,R_{e_n}})\}$. \;
\STATE $\mathtt I \leftarrow \mathtt I \cup V^*_{e_n,u_i,f_h,R_{e_n}}.$\;
\ENDFOR
\end{algorithmic}
\end{algorithm}

\ignore{\begin{table}[t!]
       \caption{Numerical parameters} \label{table:parameters}
       \centering
   \begin{tabular}{|p{4.5cm}|p{4.2cm}  |}    \hline
   Bandwidth & 1 MHz \\ \hline
   Channel model  &  SUI-Terrain type B  \\ \hline
   Channel estimation  & Perfect \\ \hline
   Path loss model  & $148 + 40 \log_{10}(\text{dis.[km]})$ \\ \hline
   eRRHs/Users' max power  &   -42.60 dBm/Hz  \\ \hline
   Noise power   & -174 dBm/Hz \\ \hline
   Distribution of users   & Uniform \\ \hline
  Distribution of files   & Random \\ \hline
  eRRH caching ratio $\mu$   & 0.6 \\ \hline
    \end{tabular}
\end{table}}
\section{Numerical Results}\label{N}
This section presents selected simulation results that compare
the completion time performances of our proposed two schemes with baseline algorithms. We consider a
downlink D2D-aided F-RAN system where the eRRHs have fixed locations and users are distributed randomly at every transmission within a hexagonal cell of radius  $900$m. We set the radius of the users' coverage zone $\mathtt R$ to $50$m and the number of eRRHs $K$ to $3$.  We consider the SUI-Terrain type B model in which the channel model of both F-RAN and D2D communications is mostly affected by the location of the users within the cell. Path loss is calculated as $148 + 40 \log_{10}(\text{distance[km]})$. We consider that the channels are perfectly estimated. The noise power and the maximum’ eRRH/user
power are assumed to be $-174$ dBm/Hz and $P_\text{max}=Q=-42.60$ dBm/Hz, respectively. The bandwidth is $1$ MHz and the eRRH caching ratio
$\mu$ is $0.6$. As discussed in  \sref{SMMM}, at the beginning of the D2D-aided F-RAN transmission,  each user already has about $45\%$ and $55\%$ of $F$ files.
To assess the performances of our proposed schemes with different thresholds ($R_{\text{th}1}=0.05$, $R_{\text{th}2}=0.5$, and $R_{\text{th}3}=5$), we simulate various scenarios with different
number of users, number of files, and file sizes.

\ignore{ All  simulated schemes are executed until all files are delivered to all users. The presented average values of the completion time are computed over a certain number of iterations.} For the sake of comparison, our proposed schemes are
compared with the following two baseline NC schemes.
\begin{itemize}
\item \textbf{Random Linear NC (RLNC)}: In RLNC algorithm, each user is scheduled to a single eRRH to which it has the maximum channel capacity. Then, each eRRH encodes all files in its \textit{cache} using random coefficients from Galois field. However,
this algorithm ignores the dynamic transmission rates. To ensure successful delivery of files to users, the selected transmission rate in each eRRH is the minimum channel capacities of all scheduled users.
\item \textbf{Classical IDNC}: For both eRRHs and D2D transmissions, this scheme focuses on network layer optimization, in which the coding decisions depends solely on the file combinations. For successful files' decoding, the transmission rates of both the eRRHs and transmitting users should match the  minimum achievable capacity of all scheduled users. 
\end{itemize}

For completeness of our work, we also compare our proposed schemes with the uncoded schemes.
\begin{itemize}
\item \textbf{Uncoded Unicast}: This scheme schedules only one user to each eRRH from which it receives an uncoded file with its maximum transmission rate. In addition, the untargeted users by the eRRHs is served by implementing uncoded D2D transmissions. 
\item \textbf{Uncoded Broadcast-F-RAN}: The eRRHs broadcast uncoded files sequentially as such all users are served. In this scheme, each eRRH transmits with the lowest transmission rate of all scheduled users.
\item \textbf{Uncoded Broadcast-D2D}: In this scheme, set of transmitting users is selected randomly to broadcast uncoded files from their \textit{Has} sets
that are missing at the largest number of their neighboring users. The transmission rate is selected based on the minimum transmission of all scheduled users.
\end{itemize}

Recently, RA-IDNC scheme is studied in \cite{21n} where all eRRHs use the same transmission rate that corresponds to the minimum transmission rate of all scheduled users. In addition, the unscheduled users by the eRRHs are scheduled from transmitting users over D2D links with the same rate that is used by the eRRHs. Thus, we include the RA-IDNC and compare it with the proposed schemes. 

  \begin{figure}[t!]
      \centering
     % \begin{minipage}{0.494\textwidth}
          \centering
         \includegraphics[width=0.4\textwidth]{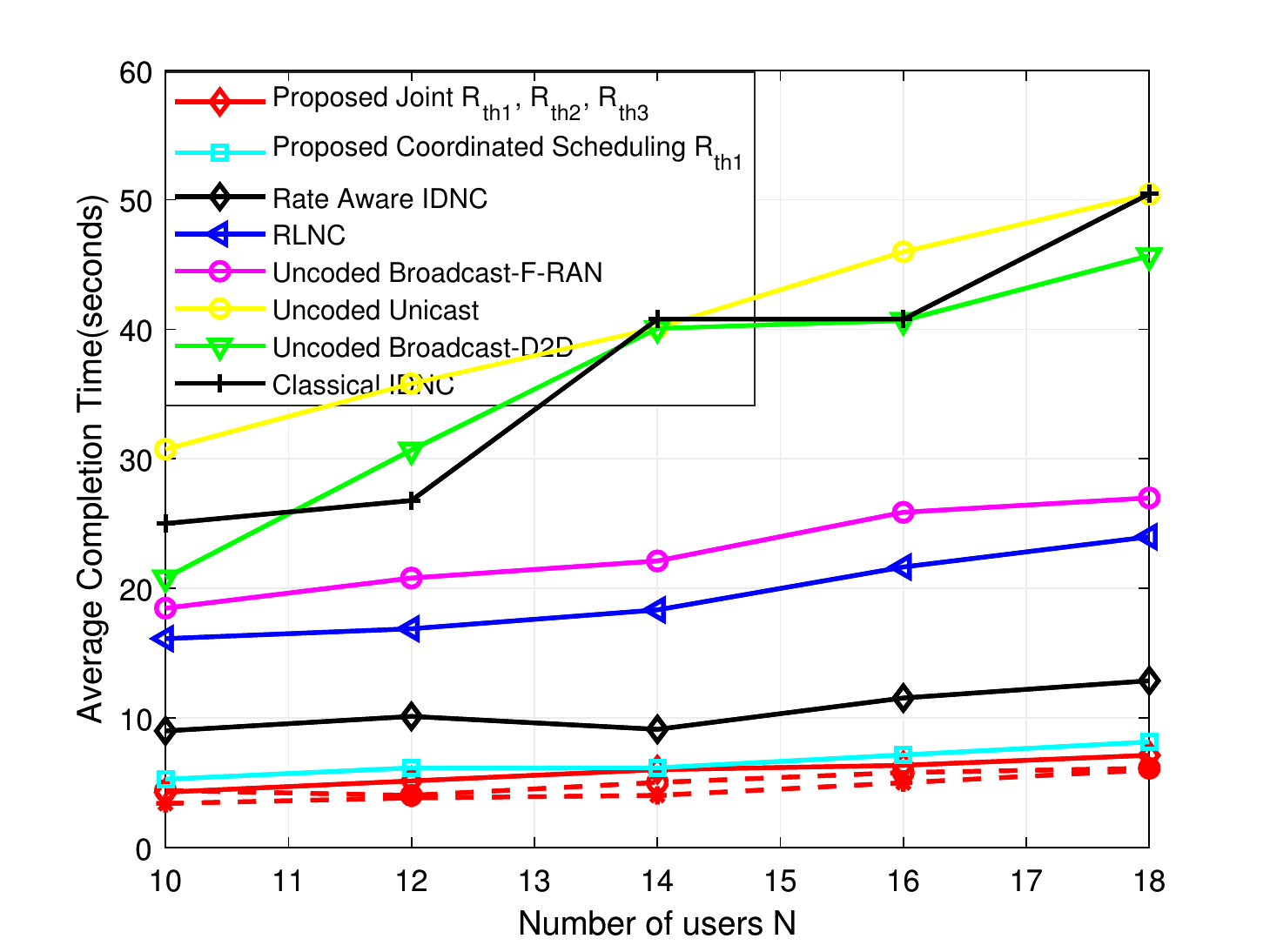} % first figure itself
          \caption{Average completion time versus the number   of users $N$.}
          \label{fig6}
      %\end{minipage}\hfill
      \ignore{\begin{minipage}{0.494\textwidth}
          \centering
         \includegraphics[width=0.85\textwidth]{./fig/figfnn.eps} % second figure itself
          \caption{Average completion time versus the number   of files $F$.}
          \label{fig7}
       \end{minipage}}
  \end{figure} 
  
\ignore{\begin{figure}[t!] 
\centering     %%% not \center
\subfigure[]{\label{fig91}\includegraphics[width=54mm]{./fig/fignn.eps}} \hfill
\subfigure[]{\label{fig9a}\includegraphics[width=54mm]{./fig/fignn.eps}} \hfill
\subfigure[]{\label{fig9b}\includegraphics[width=54mm]{./fig/figfnn.eps}}
\caption{Average completion time versus: (a) Number of users $N$ in small network size; (b) Number of users $N$ in large network size; (c) Number of files $F$.} 
\label{fig6}
\end{figure}}

In Fig. \ref{fig6}, we depict the  average completion time versus the number of users $N$. We consider a D2D-aided F-RAN model with a frame of $15$ files and a file size of $1$ Mbits. From this figure, it can be seen that the  proposed schemes offer improved performance in terms of completion time reduction as compared to the other schemes. This improved performance is due to the joint 
and coordinated schemes that (i) judiciously schedule users, adopt the transmission rate of each eRRH and optimize the transmission power of each eRRH, and (ii) select potential users for transmitting coded files over D2D links. In particular, the uncoded unicast suffers from targeting few users that have relative strong channel qualities. As a result, a higher number of transmissions, at least $(N*F)/(K+|\mathcal N_\text{tra}|)$ transmissions, is needed for frame delivery completion, and it leads to a high completion time. Uncoded broadcast schemes suffer from serving all users at the cost of adopting the transmission rate of all eRRHs and transmitting users with the minimum transmission rate of the served users. Furthermore, uncoded broadcast D2D scheme offers a poor completion time performance as all transmitting users do not benefit from the transmission, i.e., they cannot transmit and receive at the same time. RLNC is a rate-less scheme that targets all users by sending encoded file with the lowest rate of all users. On the other hand, RA-IDNC scheme offers an improved performance compared to uncoded, RLNC, and classical IDNC schemes as mentioned in \cite{21n}.
This is because the coding decisions in RA-IDNC scheme not only depends on the file combinations, but also
on the channel qualities of the scheduled users. This effectively balances between the number of scheduled users and the transmission rate of eRRHs/transmitting users. However, selecting one transmission rate (the minimum rate) for all eRRHs and transmitting users degrades the completion time performance of the RA-IDNC scheme. This is a clear limitation of the RA-IDNC scheme in [23], as it does
not fully exploit the typical variable channel qualities of the different eRRHs/transmitting users to their scheduled users. Our proposed joint and coordinated schemes fully utilize the eRRHs and transmitting users' resources to choose their own transmission rates, XOR combinations, and scheduled users. Consequently, a better performance of our proposed schemes compared to the RA-IDNC scheme is achieved. Moreover, the joint scheme optimizes the employed rates using power control on each eRRH. Thus, it works better than our proposed coordinated scheme. Note that the completion time performances of the classical IDNC and uncoded broadcast D2D schemes are of orders $10^5$ and $10^3$, respectively. Thus, we omit them from all the remaining figures.

\ignore{To evaluate the performance of our proposed
schemes in a large D2D-aided F-RAN, we plot in Fig. \ref{fig6}-(b) the average completion time versus a large number of users $N$ with $K=5$ eRRHs, $F = 30$ files and file size of $1$ Mbits. Again, for
the above-mentioned reasons in Fig. \ref{fig6}-(a), our proposed schemes outperform the other schemes. Intuitively, RA-IDNC scheme exhibits a significant degradation with large number of eRRHs. This can be explained by the fact that RA-IDNC always selects the minimum transmission rates of the eRRHs/transmitting users.}

In Fig. \ref{fig7}, we show the average completion time versus the number of files $F$. The simulated D2D-aided F-RAN system composed of $15$ users and file size of $1$ Mbits. Again, for
the above-mentioned reasons in Fig. \ref{fig6}, our proposed schemes outperform the other schemes. It can be observed from the figure that increasing the frame size leads to an increasing in the completion time of all schemes. The opportunities of mixing files using NC in the RA-IDNC and proposed schemes are limited with few files. Therefore, all NC schemes have roughly similar performances. As the number of files increases, the  increase in the completion time with our proposed schemes is low.  This is in accordance
with our results in \thref{thm:Mweight} and \thref{thm:Mweight1}, where it is shown that our proposed schemes judiciously allow each eRRH and each transmitting user to decide on a set of files to be XORed. As such, they are beneficial to a significant set of users that have relatively good channel qualities. Even though uncoded broadcast and RLNC schemes completes file transmissions in fewer transmissions ($F$ transmissions) than our proposed schemes, each of their transmission durations is
longer than a single transmission of the proposed schemes. Thanks to their optimized higher rate/power and users' transmission that result in less completion time. 

 \begin{figure}[t!]
      \centering
      \begin{minipage}{0.494\textwidth}
                \centering
               \includegraphics[width=0.8\textwidth]{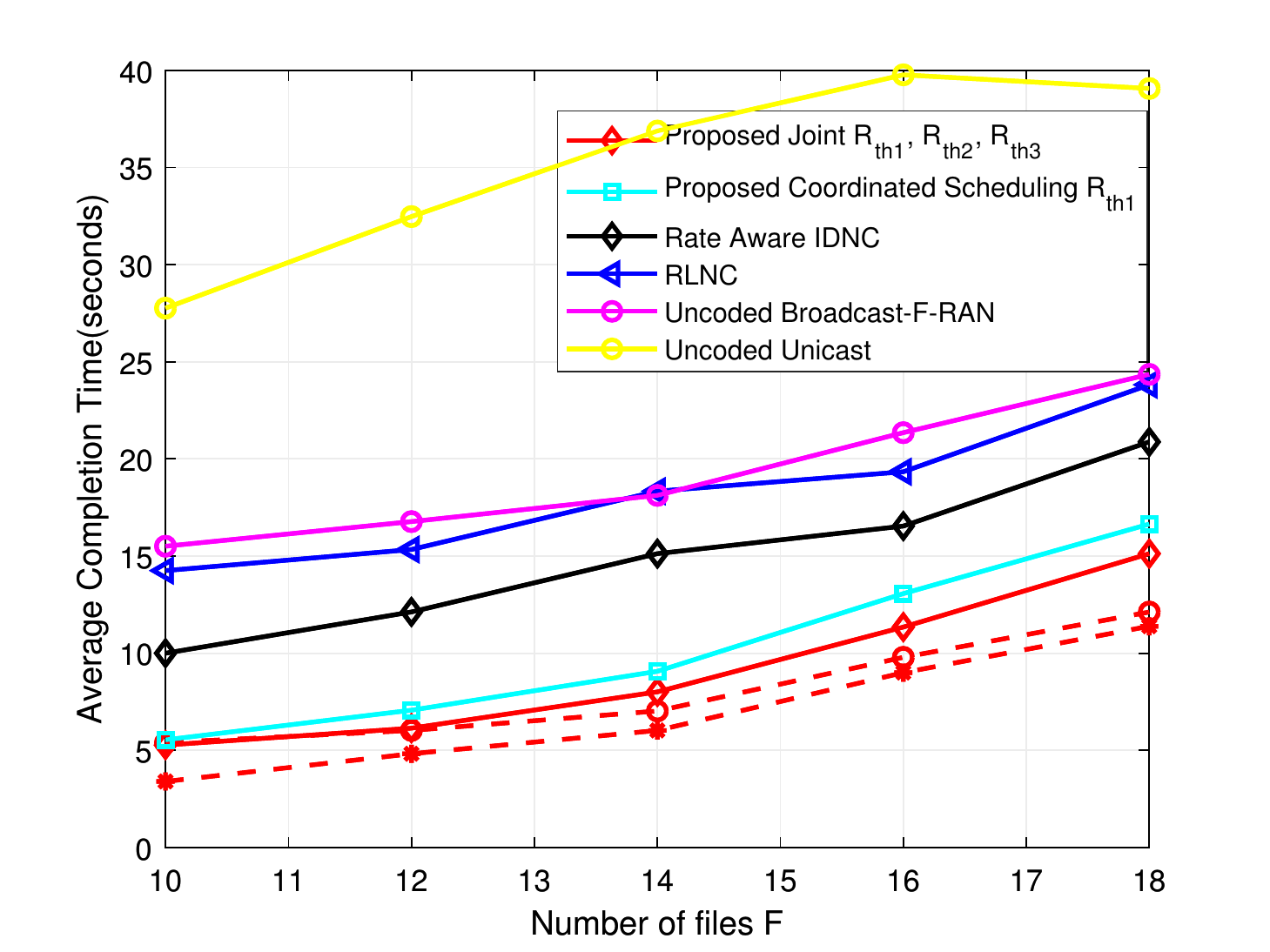} % second figure itself
                \caption{Average completion time versus the number   of files $F$.}
                \label{fig7}
             \end{minipage}\hfill
      \begin{minipage}{0.494\textwidth}
          \centering
         \includegraphics[width=0.8\textwidth]{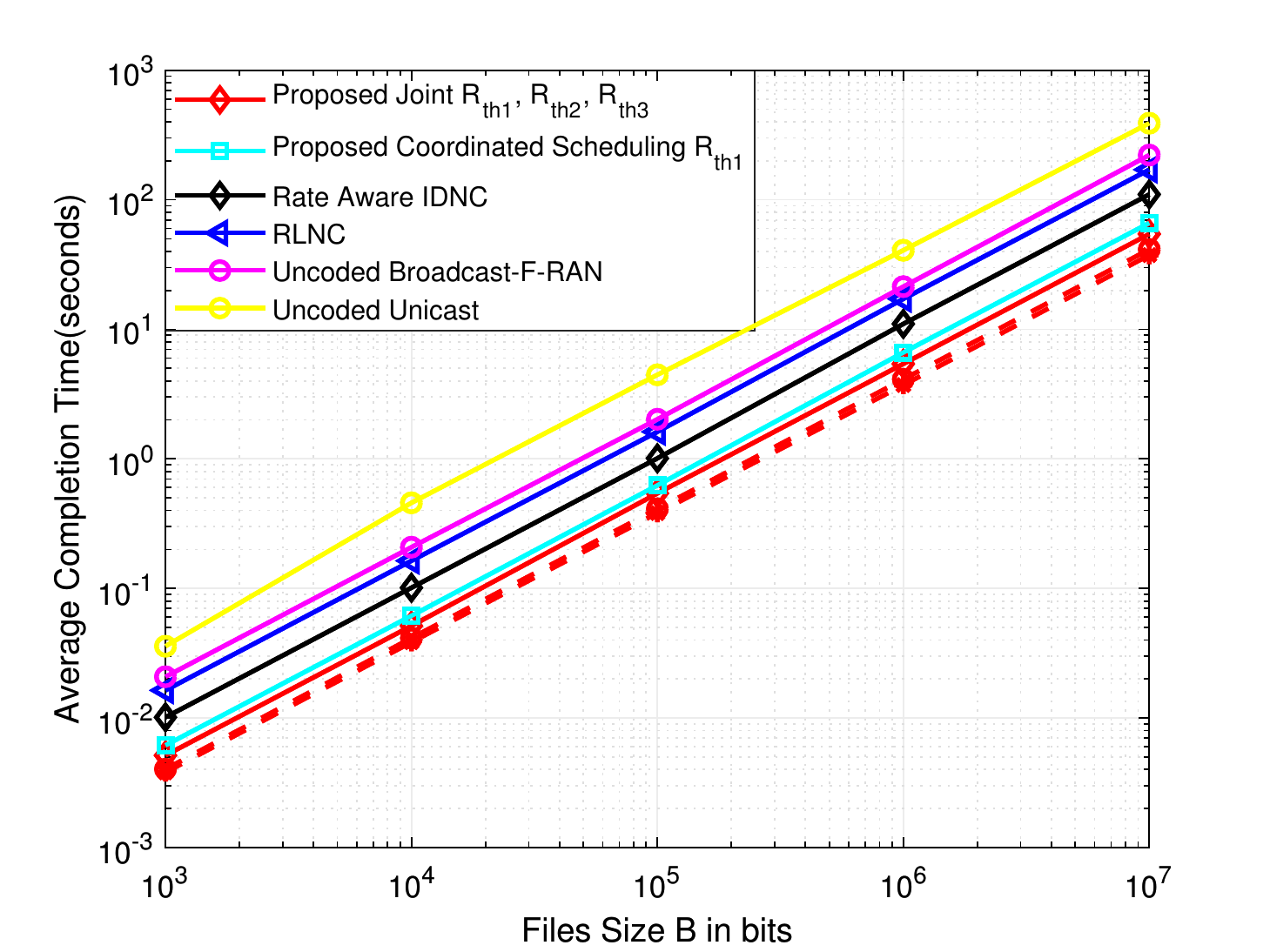} % first figure itself
          \caption{Average completion time versus file size $B$.}
          \label{fig8}
      \end{minipage}\hfill
      \ignore{\begin{minipage}{0.494\textwidth}
          \centering
         \includegraphics[width=0.85\textwidth]{./fig/figfnn.eps} % second figure itself
          \caption{Average completion time versus percentage of users' side information.}
          \label{fig9}
       \end{minipage}}
  \end{figure} 
  
\ignore{\begin{figure}
\centering     %%% not \center
\subfigure[]{\label{fig11}\includegraphics[width=54mm]{./fig/figbn.eps}} \hfill
\subfigure[]{\label{fig12}\includegraphics[width=54mm]{./fig/figbn.eps}} \hfill
\subfigure[]{\label{fig13}\includegraphics[width=54mm]{./fig/figbn.eps}}
\caption{Average completion time versus: (a) Percentage of the side information; (b) File size $B$; (c) Cell radius $\mathtt R$.} 
\label{fig7}
\end{figure}}

In Fig. \ref{fig8}, we illustrate the impact of increasing the file size $B$ on the average completion time. In this figure, we simulate the D2D-aided F-RAN system composed of $12$ users and $15$ files. We can observe that the performances of all schemes increase linearly with the file size. This is in accordance
with the completion time expression in \corref{cor:LBcompletion}, where it was emphasized that  $\mathtt T_o$ increases linearly with $B$. From physical-layer view, as $B$ increases, more bits are needed for delivering files. Thus, time delay is increased to receive files from eRRHs/transmitting users. 

Finally, some insights from our presented numerical results are given as follows. First, it is advantageous to serve many users with NC files as in the classical IDNC algorithm, but selecting the minimum transmission rate of all scheduled users degrades its performance. Thus, this scheme is impractical from physical-layer perspective. Second, although the uncoded unicast scheme uses the maximum transmission rate of each user, it needs a large number of transmissions for completion. Thus, its completion time performance is degraded.   Third, RA-IDNC scheme overcomes the limitations of the aforementioned schemes but suffers from selecting the lowest-rate of the fixed-power eRRHs in the system. This limitation further degrades the completion time performance of the RA-IDNC scheme in large network sizes with large number of eRRHs. This due to the fact that
RA-IDNC always selects the lowest-rate of
all eRRHs. Conversely, our transmission framework is more practically  relevant as it enables different transmission rates from
different eRRHs/transmitting users and optimizes the employed rates using power control on each eRRH.

\ignore{In Fig. \ref{fig7}-(c), we study the effect of increasing the coverage zones of users on the completion times of all simulated schemes. The considered parameters are $15$ users, $18$ files, file size is $1$ Mbits.  It can be seen that $\mathtt R$ has no effect on the average completion time of uncoded broadcast-F-RAN and RLNC schemes. As}
\ignore{Finally, Fig. 9 shows the effect of increasing the number of eRRHs on the completion time of the proposed  and RA-IDNC (the one has good performance compared to all other schemes) schemes. For comparison convenient, we only simulate the proposed and RA-IDNC schemes. Due to the minimum rate selection and fixed power level of all eRRHs in RA-IDNC scheme, we expect that its completion time performance suffers. The increase of the completion time is more with RA-IDNC scheme when the number of eRRHs increases. It is intuitively to note that selecting a minimum rate of an increasing set of eRRHs always gives a minimum rate. However, our proposed schemes give more flexibility to the eRRHs and transmitting users to choose their transmission rates, which has a potential effect on reducing the completion time.}

\section{Conclusion}\label{C}
We have developed a framework that exploits the cached contents at
eRRHs, their transmission rates/powers, and previously received contents by different users to deliver the
requesting contents to users with a minimum completion time in D2D-aided F-RAN. Towards this target, we have first formulated an optimization problem that seeks to minimize the completion time of users and decomposed it into two subproblems. Specifically, the first subproblem is to minimize the transmission durations of eRRHs. To solve it, we have designed an interference-aware IDNC graph that considers network-coded scheduling and power allocation for each eRRH. Based on this solution, the second subproblem maximizes the number of unscheduled users to eRRHs via D2D links. Then, we have introduced a new D2D conflict graph that
achieves an effective solution to the second subproblem based
on greedy search method. The aforementioned graph-based
solutions of the corresponding subproblems are referred to a joint approach. For the high implementation complexity of the joint approach in large networks, we have developed an alternative and efficient coordinated scheme that has relatively low implementation complexity. Simulation results have shown that our proposed schemes can effectively minimize the frame delivery time as compared to conventional schemes.

\begin{thebibliography}{10}
\bibitem{1n}
A.~Checko et al., ``Cloud RAN for mobile networks-A technology
overview,” \emph{IEEE Commun. Surveys Tuts.,} vol. 17, no. 1, pp. 405-426, 1st Quart., 2015.

\bibitem{2n}
M.~Peng, Y.~Li, Z. Zhao, and C.~Wang, ``System architecture and key technologies for 5G heterogeneous cloud radio access networks," \emph{IEEE Netw.,} vol. 29, no. 2, pp. 6-14, Mar. 2015.

\bibitem{3n}
Z.~Yang, Z.~Ding, and P. Fan, ``Performance analysis of cloud radio
access networks with uniformly distributed base stations," \emph{IEEE Trans.
Veh. Technol.,} vol. 65, no. 1, pp. 472-477, Jan. 2016.

\bibitem{4n}
Y.~Cai, F.~R. Yu, and S. Bu, ``Cloud computing meets mobile wireless communications in next generation cellular networks,” \emph{IEEE Netw.,}
vol. 28, no. 6, pp. 54-59, Nov. 2014.

\bibitem{7nnnn}
S.-H. Park, O. Simeone, O. Sahin, and S. Shamai, ``Joint precoding
and multivariate backhaul compression for the downlink of cloud
radio access networks,"  \emph{IEEE Trans. Signal Process.,} vol. 61, no. 22, pp. 5646-5658, Nov. 2013.

\bibitem{27n}
A.~Douik, H.~Dahrouj, T.-Y.~Al-Naffouri, and M.-S.~Alouini, ``Coordinated scheduling and power control in cloud-radio access networks,” \emph{IEEE Trans. on Wireless Commun.,} vol. 15, no. 4, pp. 2523-2536, Apr. 2016.

\bibitem{5n}
M.~Peng, C.~Wang, V. Lau, and H.~V.~Poor, ``Fronthaul-constrained
cloud radio access networks: Insights and challenges,” \emph{IEEE Wireless
Commun.,} vol. 22, no. 2, pp. 152-160, Apr. 2015.

\bibitem{6nn}
R. Tandon and O. Simeone, ``Harnessing cloud and edge synergies: Toward
an information theory of fog radio access networks,” \emph{IEEE Commun.
Mag.,} vol. 54, no. 8, pp. 44-50, Aug. 2016.

\bibitem{6nnnn}
A. Asadi, Q. Wang, and V. Mancuso, ``A survey on device-to-device
communication in cellular networks,” \emph{IEEE Commun. Surveys Tuts.,} vol. 16, no. 4, pp. 1801-1819, 2014.

\bibitem{9nn}
Roy Karasik, O. Simeone, and S. Shamai, ``How much can D2D communication reduce
content delivery latency in fog networks with
edge caching?," \emph{IEEE Trans. on Commun.,}  Early Access, Dec. 2019.

\bibitem{6nnn}
R.~Ahlswede, N.~Cai, S.-Y. Li, and R.~Yeung, ``Network information flow,"  \emph{IEEE Transactions on Information Theory}, vol.~46, no.~4, pp.  1204--1216, Jul. 2000.

\bibitem{9n}
S.~Sorour and S.~Valaee, ``Completion delay minimization for instantly decodable network codes,” \emph{IEEE/ACM Trans. Netw.,} vol. 23, no. 5, pp. 1553-1567, Oct. 2015.

\bibitem{9nn}
A. Douik, M.-S. Al-Abiad, and Md. J. Hossain, ``An improved weight
design for unwanted packets in multicast instantly decodable network Coding,” \emph{IEEE Commun. Lett.,} vol. 23, no. 11, pp. 2122-2125, Nov. 2019.

\ignore{\bibitem{10n} 
S.~Sorour and S.~Valaee, ``On minimizing broadcast completion delay for instantly decodable network coding,” in \emph{Proc. IEEE Int. Conf. Commun.,} May, 2010, pp. 1-5.

\bibitem{11n} 
A.~Douik, S.~Sorour, M.-S.~Alouini, and T. Y. Al-Naffouri, ``Completion time reduction in instantly decodable network coding through decoding delay control,” in \emph{Proc. IEEE Glob. Telecommun. Conf.,} Dec. 2014, pp. 5008-5013. }

\bibitem{12n} 
M.~S. Karim, P.~Sadeghi, S.~Sorour, and N. Aboutorab, ``Instantly
decodable network coding for real-time scalable video broadcast over wireless networks,” \emph{EURASIP J. Adv. Signal Process.,} vol. 2016, no. 1, p. 1, Jan. 2016.

\bibitem{13} 
T. A. Courtade and R. D. Wesel, ``Coded cooperative data exchange in multihop networks,” \emph{IEEE Trans. Inf. Theory,} vol. 60, no. 2, pp. 1136-1158, Feb. 2014.

\bibitem{13n} 
N.~Aboutorab, and P.~Sadeghi ``Instantly decodable network coding for completion time or delay reduction in cooperative data exchange systems,” \emph{IEEE Trans. on Vehicular Tech.}, vol. 65, no. 3, pp. 1212-1228, Mar. 2016.

\bibitem{14n}
S.~E.~Tajbakhsh and P.~Sadeghi, ``Coded cooperative data
exchange for multiple unicasts,” in \emph{Proc. of 2012 IEEE Inf. Theory Workshop,} Lausanne, 2012, pp. 587-591.

\bibitem{15n} 
A.~Douik, and S.~Sorour, ``Data dissemination using instantly decodable binary codes in fog radio access networks,” \emph{IEEE Trans. on Commun.,} vol. 66, no. 5, pp. 2052-2064, May 2018.

\ignore{\bibitem{17n}
A. Douik, M. S. Al-Abiad and M. J. Hossain, ``An improved weight design for unwanted packets in multicast instantly decodable network coding," in \emph{IEEE Communications Letters,} vol. 23, no. 11, pp. 2122-2125, Nov. 2019.}

\bibitem{16n}
A.~Douik, S.~Sorour, T. Y.~Al-Naffouri, and M.-S. Alouini, ``Instantly
decodable network coding: From centralized to device-to-device
communications,” \emph{IEEE Commun. Surveys Tuts.,} vol. 19, no. 2,
pp. 1201-1224, 2nd Quart., 2017.

\bibitem{18n}
A.~Douik, S.~Sorour, T.-Y.~Al-Naffouri, and M.-S.~Alouini, ``Rate aware instantly decodable network codes,” \emph{IEEE Trans. on Wireless Commun.,} vol. 16, no. 2, pp. 998-1011, Feb. 2017.

\bibitem{19n}
X.~Wang, C.~Yuen, and Y.~Xu, ``Coding based data broadcasting for time critical applications with rate adaptation", \emph{IEEE Trans. on Vehicular Tech.,} vol. 63, no. 5, pp. 2429-2442, Jun. 2014.

\bibitem{19nn}
M.~S.~Karim, A. Douik, S. Sorour and P. Sadeghi, ``Rate-aware network codes for completion time reduction in device-to-device communications," in \emph{Proc. of 2016 IEEE Intern. Conf. on Commu. (ICC),} Kuala Lumpur, 2016, pp. 1-7.

\bibitem{20n}
M.~S. Al-Abiad, A.~Douik, S.~Sorour, and Md.-J.~Hossain, ``Throughput maximization in cloud radio access networks using network coding," in \emph{Proc. of 2018 IEEE Intern. Conf. on
Commun. Works. (ICCWorkshops),} Kansas, MO, 2018, pp. 1-6.

\bibitem{20nn}
M.-S. Al-Abiad, S. Sorour, and Md. J. Hossain, ``Cloud offloading with QoS provisioning using cross-layer network coding," \emph{IEEE Globecom'18,} Abu Dhabi, UAE, 2018, pp. 1-6.

\bibitem{21nn}
M.-Saif, A.~Douik, and S.~Sorour, ``Rate aware network codes for coordinated multi base-station networks,”  \emph{2016 IEEE International Conference on Commun. (ICC)} Kuala Lumpur, 2016, pp. 1-7.

\bibitem{21n}
M.-S.~Al-Abiad, A.~Douik, and S.~Sorour, ``Rate aware network codes for cloud radio access networks,”  \emph{IEEE Trans. on Mobile Comp.,} vol. 18, no 8, pp 1898-1910, Aug. 2019.

\bibitem{22n}
M.-S.~Karim, A.~Douik, and S.~Sorour, ``Rate-aware network codes for video distortion reduction in point-to-multipoint networks, ” \emph{IEEE Trans. on Vehicular Tech.,} vol. 66, no. 8, pp. 7446-7460, Aug. 2017.

\bibitem{23nn}
M.~S.~Al-Abiad, M.~J. Hossain, and S.~Sorour, ``Cross-layer cloud offloading with quality of service guarantees in
Fog-RANs,” in \emph{IEEE Trans. on Commun.,}  vol. 67, no. 12, pp. 8435-8449, Jun. 2019.

\bibitem{33}
H. Dahrouj, W. Yu, and T. Tang, ``Power spectrum optimization for interference mitigation via iterative function evaluation,”
\emph{EURASIP J. Wireless Commun. Netw.,} vol. 2012, no. 1, pp. 1-14, 2012.

\bibitem{24n}
J.~Huang, V.-G. Subramanian, R.~Agrawal, and R.~Berry, 
``Downlink scheduling and resource allocation for OFDM systems," in \emph{Process of Conference Info. Science Sys. (CISS),} March 2006.

\bibitem{25n}
J.~Huang, V.-G.~Subramanian, R.~Agrawal, and R.~Berry, 
``Joint scheduling and resource allocation in uplink OFDM systems for broadband wireless access networks," in \emph{IEEE J. Sel. Top. Signal Proc.,} vol. 27, no. 2, pp. 226-234, Feb. 2009.

\bibitem{40}
A. Le, A.-S. Tehrani, A.-G. Dimakis, and A. Markopoulou, ``Instantly
decodable network codes for real-time applications”, in
\emph{Proc. of 2013 International Symposium on Network Coding (NetCod),} Calgary, AB, Canada, pp 1-6, Jun. 2013.
\bibitem{26n}
W.~Yu, T.~Kwon, and C.~Shin, ``Joint scheduling and dynamic power spectrum optimization for wireless multicell networks,” in \emph{Proc. of 2010 44th Annual Conf. on Inf. Sciences and Systems (CISS’ 2010), Princeton, USA,}, pp. 1-6, Mar. 2010.

\bibitem{34} 
S. S. Christensen, R. Agarwal, E. De Carvalho, and J. M. Cioffi, ``Weighted sum-rate maximization using weighted MMSE for MIMO-BC beamforming design,” \emph{IEEE Trans. Wireless Commun.,} vol. 7, no. 12, pp. 4792-4799, 2008.

\bibitem{32}
D. B. West et al., \textit{Introduction to graph theory}. Prentice hall Upper Saddle River, 2001, vol. 2.

\ignore{\bibitem{23n}
M. S. Al-Abiad, A.~Douik, S. Sorour, and M. J. Hossain, ``Throughput maximization in cloud-radio
access networks using cross-layer network
coding,” to appear in \emph{IEEE Transactions on Mobile Computing,} Available:
https://arxiv.org/abs/1806.08230.}

\bibitem{NP}
M.~G. and D.~J., ``Computers and Intractability - A Guide to the theory of NP-completeness," \textit{Freeman, New York}, 1979.

\bibitem{F1}
K.~Ya. and S.~Masuda, ``A new exact algorithm for the maximum weight clique problem," in \emph{Proc. Of the 23rd Intern. Technical Conf. on Circuits/Systems, Computers and Commun. (ITCCSCC'08)}, Yamaguchi, Japan.

\bibitem{F2}
P.~R.~J.~Ostergard, ``A fast algorithm for the maximum clique problem," \emph{Discrete Appl. Math,} vol. 120, pp. 197-207.

\ignore{\bibitem{GT}
D.~B.~West et al., \textit{Introduction to graph theory.} Prentice hall Upper Saddle River, 2001, vol. 2.}
\end {thebibliography}

%\end{spacing}

\end{document}